\newcommand{\eg}{{\it e.g.,}\xspace}
\newcommand{\etal}{{\it et~al.}}
\newcommand{\ie}{{\it i.e.,}\xspace}\soulregister{\ie}{7}
\newcommand{\etc}{{\it etc.}}
\newcommand{\ci}{{\it (i) }}
\newcommand{\cii}{{\it (ii) }}
\newcommand{\ca}{{\it (a) }}
\newcommand{\cb}{{\it (b) }}
\newcommand{\wrt}{{\it w.r.t.}\xspace}
\theoremstyle{definition}
\newtheorem{definition}{Definition}
\newtheorem{example}{Example}
\newtheorem{proposition}{Proposition}
\newtheorem{theorem}{Theorem}
\newcommand{\mmod}{\;\mathrm{mod}\;}
\renewcommand{\qedsymbol}{\hfill$\blacksquare$}
\newcommand{\ATTCK}{{\sffamily DyPS}\xspace}
\soulregister{\ATTCK}{7} % this makes it possible to \hl the term
\newcommand{\ATTCKNF}{{DyPS}\xspace}
\soulregister{\ATTCKNF}{7} % this makes it possible to \hl the term
\begin{document}

%\title{A Scheduler Side-Channel in Dynamic-Priority Real-Time Systems}
\title{On Scheduler Side-Channels \\ in Dynamic-Priority Real-Time Systems}
%DyPS -- Dynamic Priority ScheduLeak

\author{
\IEEEauthorblockN{Chien-Ying Chen\IEEEauthorrefmark{1}, Sibin Mohan\IEEEauthorrefmark{1}, Rodolfo Pellizzoni\IEEEauthorrefmark{2} and Rakesh B. Bobba\IEEEauthorrefmark{3}} 
\IEEEauthorblockA{\IEEEauthorrefmark{1}Deptartment of Computer Science, University of Illinois at Urbana-Champaign, Urbana, IL, USA}
\IEEEauthorblockA{\IEEEauthorrefmark{2}Deptartment of Electrical and Computer Engineering, University of Waterloo, Ontario, Canada}
\IEEEauthorblockA{\IEEEauthorrefmark{3}School of Electrical Engineering and Computer Science, Oregon State University, Corvallis, OR, USA}
Email: \{\IEEEauthorrefmark{1}cchen140,
\IEEEauthorrefmark{1}sibin\}@illinois.edu,
\IEEEauthorrefmark{2}rodolfo.pellizzoni@uwaterloo.ca,
\IEEEauthorrefmark{3}rakesh.bobba@oregonstate.edu
}

\maketitle

% Adding the following for page numbers.
%\pagenumbering{arabic}
\thispagestyle{plain}
\pagestyle{plain}

\begin{abstract}
% Abstract -- rewritten by Sibin
While the existence of scheduler side-channels has been demonstrated recently for fixed-priority real-time systems (RTS), there have been no similar explorations for dynamic-priority systems.
The dynamic nature of such scheduling algorithms, e.g., EDF, poses a significant challenge in this regard.
In this paper we demonstrate that \textit{side-channels exist in dynamic priority real-time systems}.
Using this side-channel, our proposed \ATTCKNF algorithm is able to effectively infer, with high precision, critical task information from the vantage point of an unprivileged (user space) task. 
Apart from demonstrating the effectiveness of \ATTCKNF, we also explore the various factors that impact such attack algorithms using a large number of synthetic task sets. 
%
%We also compare against the state-of-the-art (ScheduLeak) and demonstrate that our proposed \ATTCKNF algorithms are able to perform at similar levels (with a difference in precision ratios being less than $0.1\%$).
We also compare against the state-of-the-art and demonstrate that our proposed \ATTCKNF algorithms outperform the ScheduLeak algorithms in attacking the EDF RTS.

\begin{comment}
Scheduler side-channels have been demonstrated to pose threats in real-time systems (RTS) due to the determinism inherent in the design. Particularly, the fixed-priority RTS are prone to the scheduler side-channel attacks as the priority relation between tasks persists throughout the run-time. On the other hand, the dynamic-priority RTS are often considered exempted from such attacks due to the dynamic nature.
However, as we show in this paper, the scheduler side-channels also exist in the dynamic-priority RTS and the proposed \ATTCKNF algorithms can effectively infer critical task information from an unprivileged user task with a high precision.
Factors that impact the attack algorithms are evaluated with extensive synthetic task sets and the results are compared with the state-of-the-art scheduler side-channel attack algorithms -- ScheduLeak. The comparison suggests that the \ATTCKNF algorithms against the dynamic-priority RTS can pose a threat comparable to the ScheduLeak attack against the fixed-priority RTS with a difference in inference precision smaller than $0.1\%$.
\end{comment}
\end{abstract}

\section{Introduction}
\label{sec::intro}

Due to the increased deployment of safety-critical systems with timing criticality (\eg autonomous cars, delivery drones, industrial robots, implantable medical devices, power grid components), security for such systems becomes crucial.
Until recently, security has been an afterthought in the design of real-time systems (RTS).
However, the ever-increasing demand for using commodity-off-the-shelf (COTS) components and the demonstration of a variety of attacks against such systems in the field~\cite{chen2011stuxnet, case2016analysis, JeepHacking101, byungho2014attack, yoon2017virtualdrone, auto:koscher2010,DroneHack:Shepard2012,embeddedsecurity:teso2013} necessitates the need for a better understanding and classification of security threats aimed at RTS.

Side-channels that leak critical information about task behavior via system schedules in RTS has recently gained attention, including methods to protect against them~\cite{2016:taskshuffler, kruger2018vulnerability, nasri2019pitfalls, chen2019novel, liu2019leaking}.
In particular, the \textit{ScheduLeak} algorithms~\cite{chen2019novel} demonstrate that scheduler side-channels can be exploited by an unprivileged task (``observer task'') to leak important information such as task arrival times in fixed-priority RTS (FP RTS).
This information was then used to predict future arrival times of critical tasks (``victim tasks'').
While the leakage of such information seems to be subtle, knowledge about future arrival times, especially for critical tasks, can help increase the effectiveness of other attacks by filtering out noisy data and extracting valuable information about the victim system.
In fact, these types of attacks fall into the broader category of \textit{reconnaissance attacks}.
It has been shown that they can be the stepping stone for more sophisticated attacks~\cite{tankard2011advanced,virvilis2013big}.
For instance, the ScheduLeak paper demonstrated how cache timing attacks or even the ability to take control of autonomous drones become much simpler once the victim task's future behavior is made available to adversaries.
One main drawback of ScheduLeak is that it has only been demonstrated for FP RTS.
This significantly limits the types of systems where such attacks can be launched.
Directly applying ScheduLeak to dynamic-priority RTS also does not work well --- as evidenced by the precision of inference (for EDF; the grey bars) in Figure \ref{fig:duration_sleak_in_edf}.
Hence, we need to \textit{develop algorithms that are targeted towards dynamic-priority real-time systems}.

\begin{figure}[htb]
    \vspace{-0.5\baselineskip}
    \centering
    \includegraphics[width=0.75\columnwidth]{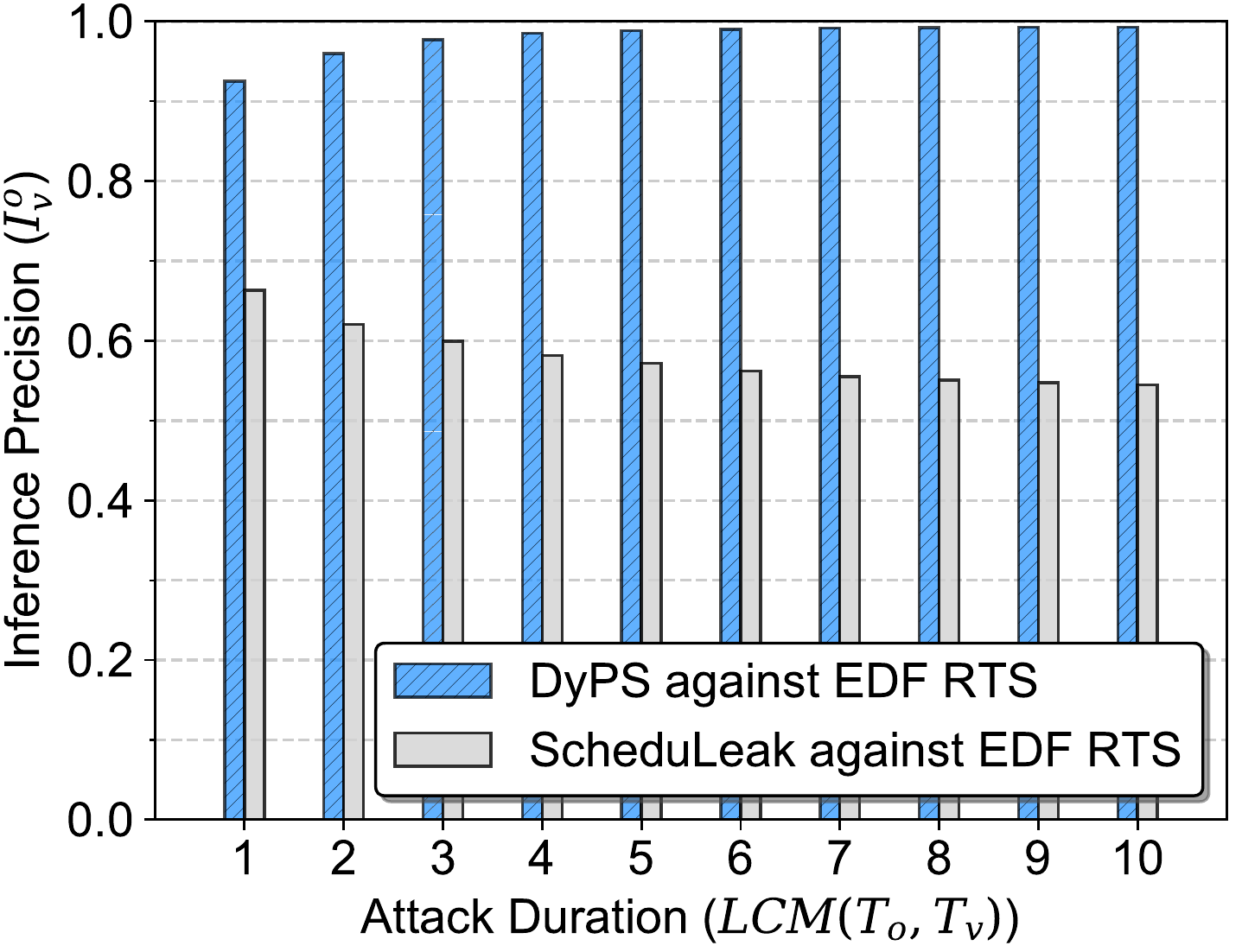}
    \caption{The results of employing the state-of-the-art (\text{ScheduLeak}) and the proposed \ATTCKNF algorithms in the scheduler side-channel attack against the EDF RTS. The end (mean) inference precision of ScheduLeak is $0.54$ that is only slightly better than a naive attack with random guesses.}
    \label{fig:duration_sleak_in_edf}
     \vspace{-0.5\baselineskip}
\end{figure}
One important challenge to leaking information via scheduler side-channels in dynamic priority RTS, \eg the earliest-deadline first (EDF) algorithm, is that (relative) task priorities are not constant and vary at run-time.
In the ScheduLeak attack~\cite{chen2019novel} that targets FP RTS, the observer task has a priority lower than the victim task at all times.
This determinism ensures that the execution of the observer task is always preempted or delayed by the victim task when both are ready to run and this is vital for inferring the arrival times of the victim.
In contrast, in an EDF RTS, task priorities are determined dynamically based on each job's absolute/relative deadline at each scheduling point.
That is, no task will always have a higher priority relative to another task in the system. %\footnote{Except in very specific cases where tasks are designed to be as such --- we can handle such systems here as well.}.
Hence, the ScheduLeak assumption about a persistent (relative) priority relationship between any two tasks in the system becomes invalid in EDF RTS.
Consequently, the ScheduLeak algorithms designed for FP RTS will fail (or have much lower success rates) while dealing with EDF RTS.

We present the \underline{Dy}namic-\underline{P}riority \underline{S}cheduLeak (\ATTCKNF) algorithms that
\ca demonstrate the existence of scheduler side-channels in the Earliest Deadline First (EDF) scheduling algorithm and
\cb make use of such information to extract critical task information.
\ATTCKNF builds on top of ScheduLeak but introduces additional steps and analyses to overcome the challenges \wrt dynamic priority schedulers.
We then present our findings on how to craft effective attacks --- that builds upon the pairwise relationships between tasks in the system.
In addition, we also explore conditions that could limit the scope of this attack.
Finally, our evaluation demonstrates the effectiveness of \ATTCKNF and also compares it to the state-of-the-art (ScheduLeak) --- we are able to match the performance of ScheduLeak with a difference in precision of less than $0.1\%$.

To summarize, this paper makes the following contributions:
\begin{enumerate}
  \item \ATTCKNF \ --- a set of attack algorithms that overcome the uncertainty in the EDF scheduling to accurately extract critical task information from the scheduler side-channels [Section~\ref{sec:scheduler_side_channels}];

  \item Analyses and metrics to understand the factors that can influence an attacker's ability to carry out a successful \ATTCKNF attack [Section~\ref{sec:analysis}];

  \item Comparison with the state-of-the-art scheduler side-channel attack algorithms  [Section~\ref{sec:eval}].
\end{enumerate}

\section{System and Adversary Model}

%================================%
%========== Subsection ==========%
%================================%
\subsection{System Model}
\label{sec:system_model}
% Discrete Time Model

In this paper, a discrete time model \cite{isovic2001handling} is considered.
We assume that there exists a system timer (\eg a $\mathtt{CLOCK\_MONOTONIC}$-based timer in Linux, a global tick counter in FreeRTOS) 
that produces time ticks that the EDF scheduler can use and that the tick count is an integer. 
We further assume that a unit of time is equal to a time tick and all system parameters are multiples of this time tick.
We denote an interval starting from time point $a$
and ending at time point $b$ by $[a,b)$ or $[a,b-1]$ (and hence its length equals $b-a$.)

% Task model
We consider a uni-processor, single-core, preemptive, dynamic-priority RTS running the EDF scheduling algorithm. The system consists of $n$ real-time tasks $\Gamma=\{\tau_1,\tau_2...,\tau_n\}$, each of which can be either a periodic or a sporadic task\footnote{A task may also be an aperiodic task. However, in systems like real-time Linux (\ie Linux with the $\mathtt{PREEMPT\_RT}$ patch), aperiodic tasks only get to run in slack time (\ie when no real-time tasks are in the ready queue). As a result, aperiodic tasks do not influence how real-time tasks behave and thus are ignored in this paper.}. 
A task $\tau_i$ is modeled by $C_i$, $T_i$, $D_i$, $\phi_i$ where $C_i$ is the worst-case execution time (WCET), $T_i$ is the period (or the minimum inter-arrival time for a sporadic task), $D_i$ is the relative deadline and $\phi_i$ is the task phase\footnote{The task phase is defined as the offset from the zero time point to any of the task's arrival time points projected on the period on the zero time point. Thus, $\phi_i<T_i$. It should not be confused with the arrival time point of the task's first job.}.
We assume that every task has a distinct period (or the minimum inter-arrival time) and that $D_i=T_i$~\cite{LiuLayland1973}.
We denote the $k$-th job of the task $\tau_i$ by $\tau_i^k$ and it is modeled by $c^k_i$, $a^k_i$, $s^k_i$, $d^k_i$ where $c^k_i$ denotes the execution time ($c^k_i \leq C_i$), $a^k_i$ is the absolute arrival time, $s^k_i$ is the start time and $d^k_i$ is the absolute deadline.
For simplicity, we use $c_i$, $a_i$, $s_i$, $d_i$ when referring to an arbitrary job of $\tau_i$ if the job ordering is unimportant. Furthermore, we use ``task'' and ``job'' interchangeably. 
In this paper, we only consider the task set that is \textit{schedulable} by the EDF scheduling algorithm. Therefore, $\sum_{\tau_i \in \Gamma} \dfrac{C_i}{T_i} \leq 1$. % and $a^k_i + C_i \leq d^k_i$. 
%\todo[inline]{shouldn't this be start time instead of arrival time?}
We assume that the task release jitter is negligible, and thus $a^{k+1}_i-a^{k}_i=T_i$ for a periodic task and $a^{k+1}_i-a^{k}_i \geq T_i$ for a sporadic task. 

% EDF Scheduling Algorithm
In the EDF scheduling algorithm, a job with smaller absolute deadline gets to run first and is considered to have higher priority among other ready jobs that have greater absolute deadlines.
In the case that multiple jobs in the ready queue have the same absolute deadline, they are considered to have the same priority and the EDF scheduler \textit{randomly} selects one of the jobs to run.
%
% define execution interval
We define a task's ``execution interval'' to be an interval during which the task runs continuously.
%We define an interval during which a task runs continuously as an ``execution interval'' of the task.
%================================%
%========== Subsection ==========%
%================================%
\subsection{Adversary Model}
Similar to the adversary model introduced in existing work on scheduler side-channels in RTS~\cite{chen2019novel, 2016:taskshuffler, liu2019leaking}, we assume that the attacker is interested in learning the task phase (and then inferring the future arrival time points) of a critical, \textit{periodic} task (the victim task, denoted by $\tau_v$) in the system. 
%The attacker does not have knowledge of the exact task phase since 
This is considered to be part of a reconnaissance phase that can be a part of a larger attack. Such an attack will benefit from the inferences of the task's future arrival time points. 
The attacker launches
%reconnaissance attack 
the proposed \ATTCKNF attack algorithms
that exploit the scheduler side-channels using an unprivileged, periodic task (the observer task, denoted by $\tau_o$) running on the same victim system. 
Once the inference of the victim task's phase is obtained, it is up to the attacker to decide if further attacks should be launched using the same observer task or via other attack surfaces.
The ultimate goal of the attacks varies with the adversaries.
It has been shown that the inferred future arrival time points can be used to help increase the success rate of an attack (\eg overriding the control of a rover system) or filtering out noisy data while extracting valuable information (\eg monitoring the task's execution behavior by using a cache-timing side-channel attack)~\cite{chen2019novel}.

%================================%
%========== Subsection ==========%
%================================%
%\subsection{The Observer Task}
%\label{sec:observer}
%In this paper, we use a strategy similar to the ScheduLeak attack. \hl{We let the observer task reconstruct its own execution intervals and infer the victim task's phase.}
The \ATTCKNF attack requires only the observer task to ensure the success for the attack algorithms introduced in this paper.
We assume that the observer task has access to a system timer that has a resolution that is coarser than or equal to a time tick.
The observer task uses the timestamps read from such a system timer to reconstruct its own execution intervals and infer the victim task's phase.
However, this method only works when the victim task has a priority higher than the observer task~\cite[Theorem 1]{chen2019novel}, which is not always true under the EDF scheduling algorithm.
More specifically, in the case of EDF RTS, we are more interested in the priority relationships between tasks at \textit{the instant when the victim task arrives}. %in the victim task's arrivals.
To better clarify the relation between the observer task and the victim task, we define the term ``observability'' as follows:

\begin{definition}
\label{def:observability}
(Observability) A victim task's arrival at $a_v$ is said to be observable by the observer task if the observer task has a priority lower than the victim task at $a_v$. 
% (Observability) The victim task is said to be observable by the observer task if the observer task has a priority lower than the victim task at any of the time points when the victim task arrives.
\qedsymbol
\end{definition}

%For simplicity, we extend this definition and say that the victim task is observable by the observer task if the observer task has a priority lower than the victim task at any of the time points when the victim task arrives.
%
In an FP RTS, it is trivial to see that every arrival of the victim task is observable by the observer task if the victim task has a fixed, higher priority than the observer task.
In an EDF RTS, it depends on both the tasks' periods and the absolute deadlines at run-time. We first determine the observability of a given observer and victim task pair by using the following theorem:

\begin{theorem}
\label{th:ToObserveTv}
%In EDF RTS, given a victim task and a observer task whose periods are $T_v$ and $T_o$, $T_v \neq T_o$, it is impossible for the observer task to ``observe'' the victim task's arrivals if $T_v < T_o$.
%
For the EDF scheduling algorithm, given an observer task $\tau_o$ and a victim task $\tau_v$ whose periods are $T_o$ and $T_v$ respectively and $T_o \neq T_v$, the victim task's arrivals may be observable by the observer task only if $T_o > T_v$.
\end{theorem}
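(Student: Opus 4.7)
The plan is to unpack the definition of observability and compare deadlines directly, since under EDF that is precisely what determines priority. Fix an arrival time $a_v$ of the victim; its absolute deadline is $d_v = a_v + T_v$. For the arrival to be observable, the observer task must hold lower priority than the victim at the instant $a_v$, which in EDF means the observer must have a currently pending (ready or running) job whose absolute deadline strictly exceeds $d_v$. If no observer job is active at $a_v$, the priority comparison is vacuous and the victim arrival is not observable in the sense required by Definition~\ref{def:observability} (nothing the observer does or perceives is altered by the arrival). So I may assume a pending observer job with arrival time $a_o \le a_v$ and deadline $d_o = a_o + T_o$.

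Next, I would translate the observability condition $d_o > d_v$ into a condition on the periods. Substituting gives
\begin{equation*}
a_o + T_o \;>\; a_v + T_v \;\Longleftrightarrow\; T_o - T_v \;>\; a_v - a_o.
\end{equation*}
Because the observer job is still pending at $a_v$, we have $a_v - a_o \ge 0$. Hence the inequality forces $T_o - T_v > 0$, i.e.\ $T_o > T_v$. Taking the contrapositive and using the hypothesis $T_o \ne T_v$, if $T_o < T_v$ then for \emph{every} pending observer job at $a_v$ we get $d_o = a_o + T_o \le a_v + T_o < a_v + T_v = d_v$, so the observer has strictly higher priority and the victim's arrival cannot be observable.

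The main obstacle is not an analytic one but a conceptual one: being precise about what "the observer's priority at $a_v$" means under EDF, where priority is a property of jobs rather than tasks. I would resolve this by explicitly restricting the comparison to an observer job that is alive at $a_v$, and by noting separately that if no such job exists then the notion of observability does not apply. Once that subtlety is nailed down, the remainder of the proof is a one-line deadline arithmetic argument, and the statement of the theorem ("only if", i.e.\ a necessary condition) matches exactly what the contrapositive yields.
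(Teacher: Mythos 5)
Your argument is correct and follows essentially the same route as the paper's proof: both unpack observability as $a_o \le a_v$ with $d_o > d_v$, rewrite this as $T_o - T_v > a_v - a_o \ge 0$ to conclude $T_o > T_v$, and then check that $T_o < T_v$ makes every pending observer job's deadline smaller than the victim's. Your explicit handling of the case where no observer job is pending is a minor clarification the paper leaves implicit, but the substance is identical.
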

\begin{proof}
Definition~\ref{def:observability} for EDF means that the observer task has a larger deadline when the victim task arrives.
That is, $a_o \leq a_v$ and $d_o>d_v$ (or $a_o+T_o>a_v+T_v$) for some jobs of $\tau_o$ and $\tau_v$.
Rewriting the above deadline inequality as $T_o-T_v>a_v-a_o$ indicates that $T_o$ must be greater than $T_v$ since $a_v-a_o \geq 0$.
%
%Now let's consider the case when $T_o<T_v$, it can be seen that when a job of the observer task arrives at the same time as a job of the  victim task (\ie when both tasks arrive at the same time \hl{and thus the observer task has the largest possible deadline under this condition}\todo{RB: not sure what this means}), it has a deadline smaller than the victim task (\ie $a_o=a_v$ and thus $a_o+T_o<a_v+T_v \Rightarrow d_o<d_v$). Therefore, the victim task's arrivals cannot be observed by the observer task if $T_o<T_v$.
Now let's consider the case when $T_o<T_v$. It can be seen that when both tasks arrive at the same time (which is the case when the observer task has the largest possible deadline relative to the victim task), the observer task still has a deadline smaller than the victim task (\ie $a_o=a_v$ and thus $a_o+T_o<a_v+T_v \Rightarrow d_o<d_v$). Therefore, no victim task's arrival can be observed by the observer task if $T_o<T_v$.
\end{proof}

%Based on Proposition~\ref{prop:ToObserveTv}, the observation is feasible only when $T_o>T_v$. Therefore, we make this restriction as an assumption in this paper accordingly.
Based on Theorem~\ref{th:ToObserveTv}, we make an assumption that the observer task must have a period larger than the victim task (\ie $T_o>T_v$) in this paper.

\section{The Scheduler Side-Channels in EDF}
\label{sec:scheduler_side_channels}

The scheduler side-channels in the FP RTS enable an unprivileged, low-priority task to learn precise timing information of a periodic, critical task.
% In this paper, we use a strategy similar to the ScheduLeak attack.
Similar scheduler side-channels exist in the EDF RTS due to the fact that both types of RTS are preemption-based systems. 
However, because of the dynamic nature, there are additional conditions and restrictions to be considered for making the attack succeed under the EDF scheduling algorithm.
In this section, we present these constraints along with the details of the proposed \ATTCKNF algorithms.
% The full attack steps are summarised after in Section~\ref{sec:attack_summary}.

%================================%
%========== Subsection ==========%
%================================%
% \subsection{Observing The Victim Task's Arrivals}
% \label{sec:observing_victim_arrivals}
\subsection{Challenges and Overview}
\label{sec:challenges_and_overview}
In this paper, the attacker's goal is to infer the victim task's phase and then predict its future arrival time points. 
%Like in ScheduLeak, it is achieved by letting the observer task reconstruct and analyze its own execution intervals. % that may have observed the arrivals of the victim task. % in which the victim task's arrival information is enclosed.
This is achieved by allowing the observer task to reconstruct and analyze \textit{its own execution intervals}.
When the victim task arrives with a priority higher than the job of the observer task that has been scheduled, the execution of the latter is either delayed or preempted. As a result, the victim task's execution (including the arrival instant) is enclosed in the observer task's execution intervals.
%and this is where we call such an arrival being observed by the observer task.
%And this should happen periodically as both tasks run in a periodic fashion.
By reconstructing execution intervals for a sufficiently long duration (see Section~\ref{sec:eval_infer_phiv} for the evaluation of the attack duration), it is possible to infer the victim task's phase. 
%as shown by the ScheduLeak attack\footnote{In ScheduLeak, an attack duration of $10 \cdot LCM(T_o,T_v)$ was generally sufficient to infer the correct $\phi_v$ in most task sets.
Yet, even if we assume that $T_o > T_v$ (Theorem~\ref{th:ToObserveTv}) for a given observer and victim task pair, it is not guaranteed that every arrival of the victim task is observable by the observer task due to the dynamic priority in EDF.

\begin{example}
\label{ex:counterToObserveTv}
Consider an observer task $\tau_o$ ($T_o=10$, $C_o=4$) and a victim task $\tau_v$ ($T_v=8$, $C_v=2$). 
%, without making any assumption on the task phases, the following two edge cases demonstrate that the given $\tau_o, \tau_v$ pair may have reversed priority relation even if Proposition~\ref{prop:ToObserveTv} is satisfied.
%Consider an observer task and a victim task as shown in the table below.
Without making any assumption on the task phases,
Figure~\ref{fig:ex_observable} demonstrates how dynamic priorities impact the observability in EDF schedules.
In Figure~\ref{fig:ex_observable}(a), both $\tau_o$ and $\tau_v$ arrive at the same time point (\ie $a_o=a_v$). Since $\tau_v$ has a higher priority (because $d_v<d_o$), the execution of $\tau_v$ delays $\tau_o$ that is supposed to start at $a_o$. As a result, the arrival at $a_v$ can be observed by $\tau_o$.
In Figure~\ref{fig:ex_observable}(b), $\tau_v$ arrives at a point when $d_o<d_v$. As $\tau_o$ is currently executing and has a higher priority at the arrival point $a_v$, the execution of $\tau_v$ is delayed by the execution of $\tau_o$. Consequently, the observer task fails to observe the victim task's arrival at $a_v$.
\qedsymbol

\begin{comment}
\begin{center}\footnotesize
\vspace{-0.5\baselineskip}
\begin{tabular}{|c||c|c|}
\hline 
 & $T_i$ & $C_i$ \\ 
\hline \hline 
$\tau_o$ & 10 & 4  \\ \hline 
$\tau_v$ & 8 & 2 \\ \hline
%\multicolumn{3}{r}{} \\
\end{tabular} 
\end{center}
\end{comment}

\end{example}

\begin{figure}[t]
    \centering
    \begin{subfigure}[t]{0.99\columnwidth}
        \centering
        \includegraphics[width=0.99\columnwidth]{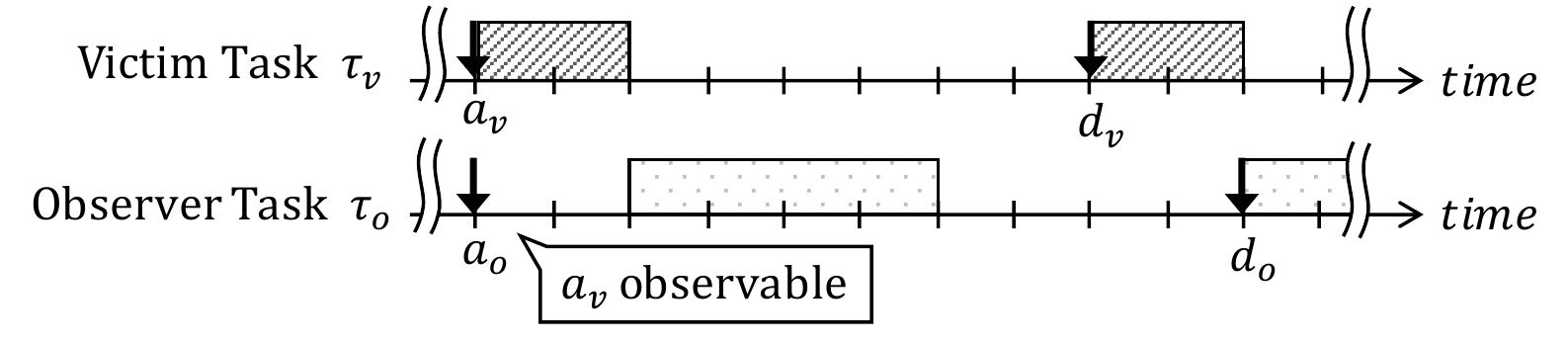}
        %\vspace{-1\baselineskip}
        \caption{$d_v<d_o$ and thus $a_v$ is observable by $\tau_o$}
    \end{subfigure}%  
    %\hfill  
    \vspace{1\baselineskip}
    \begin{subfigure}[t]{0.99\columnwidth}
        \centering
        \includegraphics[width=0.99\columnwidth]{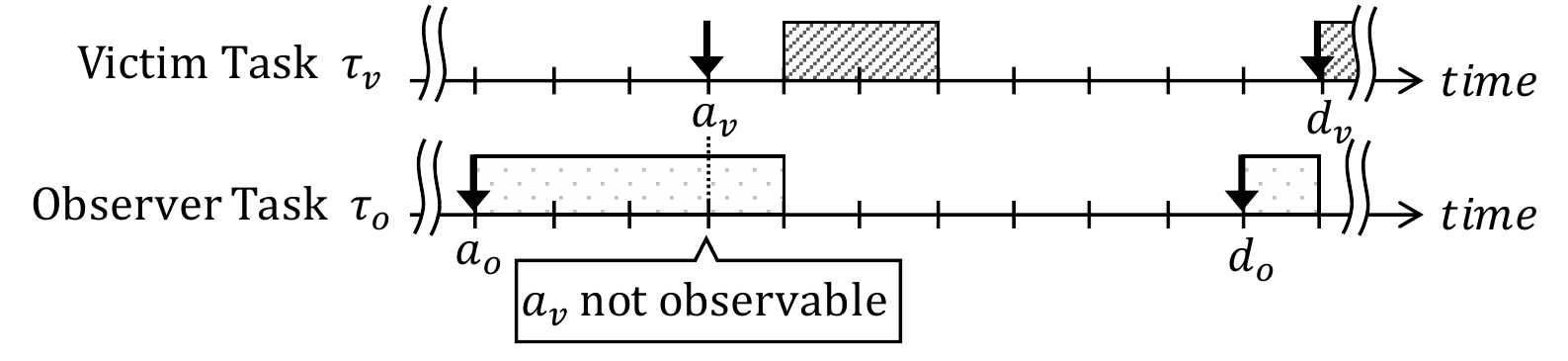}
        %\vspace{-1.3\baselineskip}
        \caption{$d_v>d_o$ and thus $a_v$ is not observable by $\tau_o$}
	%\label{fig:arrv_seg_b}
    \end{subfigure}
    %\vspace{-0.3\baselineskip}
    \caption{Examples of the two conditions (the victim task's arrival, $a_v$, being observable and not observable by the observer task under the EDF scheduling) elaborated in Example~\ref{ex:counterToObserveTv}.}
    \label{fig:ex_observable}
    %\vspace{-0.5\baselineskip}
     \vspace{-1\baselineskip}
\end{figure}

Apart from the fact that not all the observer task's execution intervals encapsulate the victim task's arrivals, the above example also hints at the fact that a part of an execution interval may still be admissible even when such an execution interval is considered invalid \wrt the observation of the victim task's arrivals.
To illustrate, let's consider the observer task's execution interval starting at $a_o$ in Figure~\ref{fig:ex_observable}(b). It is in fact safe to leverage the first half of the execution interval to infer there is no arrival. This is because if the victim task arrived within this interval it would have higher priority (earlier deadline) and would preempt the observer task. %since this region does not violate the definition of observability.
But considering the second half of the execution interval to make the same inference would cause a false negative observation (\ie no arrival is inferred while there is one) and mislead the attack results.
%This is also the major reason that ScheduLeak algorithms will fail when used against EDF RTS.
%On the other hand, if the attacker can exclude such ineffective part of the execution intervals, then ScheduLeak algorithms (with some modifications) may become applicable. 
%\hl{Yet, this is not trivial to do since the attacker does not know the (absolute) deadlines of any jobs of the victim task.}
%As shown by Example~\ref{ex:counterToObserveTv} and Proposition~\ref{prop:ToObserveTv}, $T_o>T_v$ is a necessary but not a sufficient condition to allow the observer task to observe the victim task with using its own execution intervals.
%What really useful to the observer task is the part where the victim task has a priority higher than the observer task.
To identify the part of the execution interval that is valid for observing the victim task's arrivals, we present the following theorem:

\begin{theorem}
\label{th:valid_observation_region}
%For a given job of the observer task arriving at $a_o$, only the arrivals of the victim task that appear in the range $[a_o, a_o+(T_o-T_v))$ are observable by the observer task.
For a given job of the observer task arriving at $a_o$, only the execution interval(s) within $[a_o, a_o+T_o-T_v)$ is valid for observing the arrivals of the victim task.
%Only the victim task's arrivals landing in the range $[a_o, a_o+(T_o-T_v))$ are observable by the observer task.
%Only the range $[a_o, a_o+(T_o-T_v))$ is valid for the observer task to observe the victim task's arrivals.
%Given a victim task $\tau_v$ and an observer task $\tau_o$, only the arrivals that satisfy $a_o \leq a_v < a_o+(T_o-T_v)$ are observable by the observer task.
%Given a victim task $\tau_v$ and an observer task $\tau_o$ in an EDF RTS, the victim task's arrivals are only observable by the observer task's execution intervals in the range $[a_o, a_o+(T_o-T_v))$.
\end{theorem}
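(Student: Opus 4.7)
The plan is to fix an individual job of the observer task and reason about which hypothetical victim arrivals within its span are guaranteed to preempt it under EDF. Only in the sub-window where preemption is guaranteed does ``$\tau_o$ is still executing'' rule out a concurrent victim arrival; outside that window, the observer's execution is consistent with a missed arrival and is therefore invalid for the attack.

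Concretely, I would fix an observer job $\tau_o^k$ with arrival $a_o$ and absolute deadline $d_o = a_o + T_o$, so that every execution interval of this specific job is contained in $[a_o, d_o)$. Let $t$ denote a hypothetical victim arrival time inside such an execution interval; the victim's absolute deadline is $t + T_v$. Under EDF the victim has strictly higher priority than $\tau_o^k$ precisely when $t + T_v < a_o + T_o$, i.e.\ $t < a_o + T_o - T_v$. In that case the victim would immediately preempt (or, if $t = a_o$ and both are ready simultaneously, prevent the start of) $\tau_o^k$, so observing $\tau_o^k$ running at time $t$ is sufficient to conclude that no victim job arrived at $t$. This matches exactly the observation principle inherited from the FP case.

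For the converse direction, I would argue that if $t \geq a_o + T_o - T_v$, the victim's deadline is no earlier than $d_o$, so $\tau_o^k$ retains (at least) equal priority and the scheduler may continue to execute it despite the arrival. Hence the observer's continued execution on the tail portion $[a_o + T_o - T_v,\, d_o)$ does \emph{not} rule out a victim arrival there, producing the kind of false negative highlighted just before the theorem's statement in the discussion of Figure~\ref{fig:ex_observable}(b). Combining both directions pins the valid observation region of $\tau_o^k$ to $[a_o, a_o + T_o - T_v)$.

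The main obstacle is the boundary case $t + T_v = a_o + T_o$. Here the observer and victim share an absolute deadline, and by the system model the EDF scheduler breaks the tie by a random choice; so a victim arrival at $t = a_o + T_o - T_v$ is not guaranteed to preempt $\tau_o^k$. This is what forces the right endpoint to be open rather than closed, and any careful write-up must invoke the random tie-breaking rule from Section~\ref{sec:system_model} explicitly at this point rather than relying on $\leq$ vs.\ $<$ by convention alone.
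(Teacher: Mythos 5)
Your proposal is correct and follows essentially the same argument as the paper: both compare the victim's deadline $a_v+T_v$ against the observer's deadline $a_o+T_o$ and locate the crossover at $a_v=a_o+T_o-T_v$, concluding that only execution before that point certifies the absence of a victim arrival. Your explicit handling of the equal-deadline boundary via the random tie-breaking rule is a slightly more careful rendering of what the paper folds into its non-strict condition $d_o\leq d_v$, but it is not a different proof.
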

\begin{proof}
The observer task may observe the victim task's arrivals if $a_o \leq a_v$ and $d_o>d_v$.
However, it is unknown to the attacker when a job of the victim task would arrive, so the attacker must assume that the victim task may arrive at any time point and exclude the part of the execution intervals that may contain false information. 
For the jobs of $\tau_o$ whose deadlines satisfy $d_o>d_v$, all the execution intervals in such jobs provide valid observations of the victim task's arrivals. That is, for a given $a_o$ of a job whose deadline meets $d_o>d_v$, all the execution intervals within the range $[a_o, a_o+T_o)$ are valid.
In contrast, when $d_o \leq d_v$, the execution of $\tau_o$ may interfere\footnote{By ``$\tau_i$ interferes with an arrival of $\tau_j$'' we mean that $\tau_i$ has a priority higher than $\tau_j$ at the arrival point of $\tau_j$ and hence the start of $\tau_j$ will be delayed by the execution of $\tau_i$.} with the victim task's arrivals. From $\tau_o$'s point of view, the earliest victim task's arrival that may be interfered by $\tau_o$ occurs when $d_o=d_v$ and that arrival time point can be represented by $a_v=d_v-T_v=d_o-T_v=a_o+T_o-T_v$. 
Hence, it is possible for the observer task's execution to interfere with any arrivals of the victim task if the execution spans across the time point $a_o+T_o-T_v$. 
%Therefore, for a given job of $\tau_o$, the execution at and beyond $a_o+T_o-T_v$ should be excluded.
Therefore, only the execution intervals within $[a_o, a_o+T_o-T_v)$ is valid for the observer task to observe the victim task's arrivals.
\end{proof}

Using Theorem~\ref{th:valid_observation_region}, it is possible for the observer task to reconstruct only the valid part of the execution intervals. However, the attacker may not directly use such a theorem as it requires $a_o$ (or more precisely, the task phase $\phi_o$) to be known. 
If the attacker is already present when the system starts, $\phi_o$ may be known to the attacker. However, in most attack cases where the attacker enters the victim system after the system starts, the attacker may not be able to easily learn the exact value of $\phi_o$ without further reconnaissance.
In such cases, the attacker must first obtain $\phi_o$ before employing Theorem~\ref{th:valid_observation_region} and proceeding to infer the victim task's phase. %Next, we elaborate the algorithms for reconstructing $\phi_o$.

\begin{figure}[t]
    \centering
    \includegraphics[width=0.75\columnwidth]{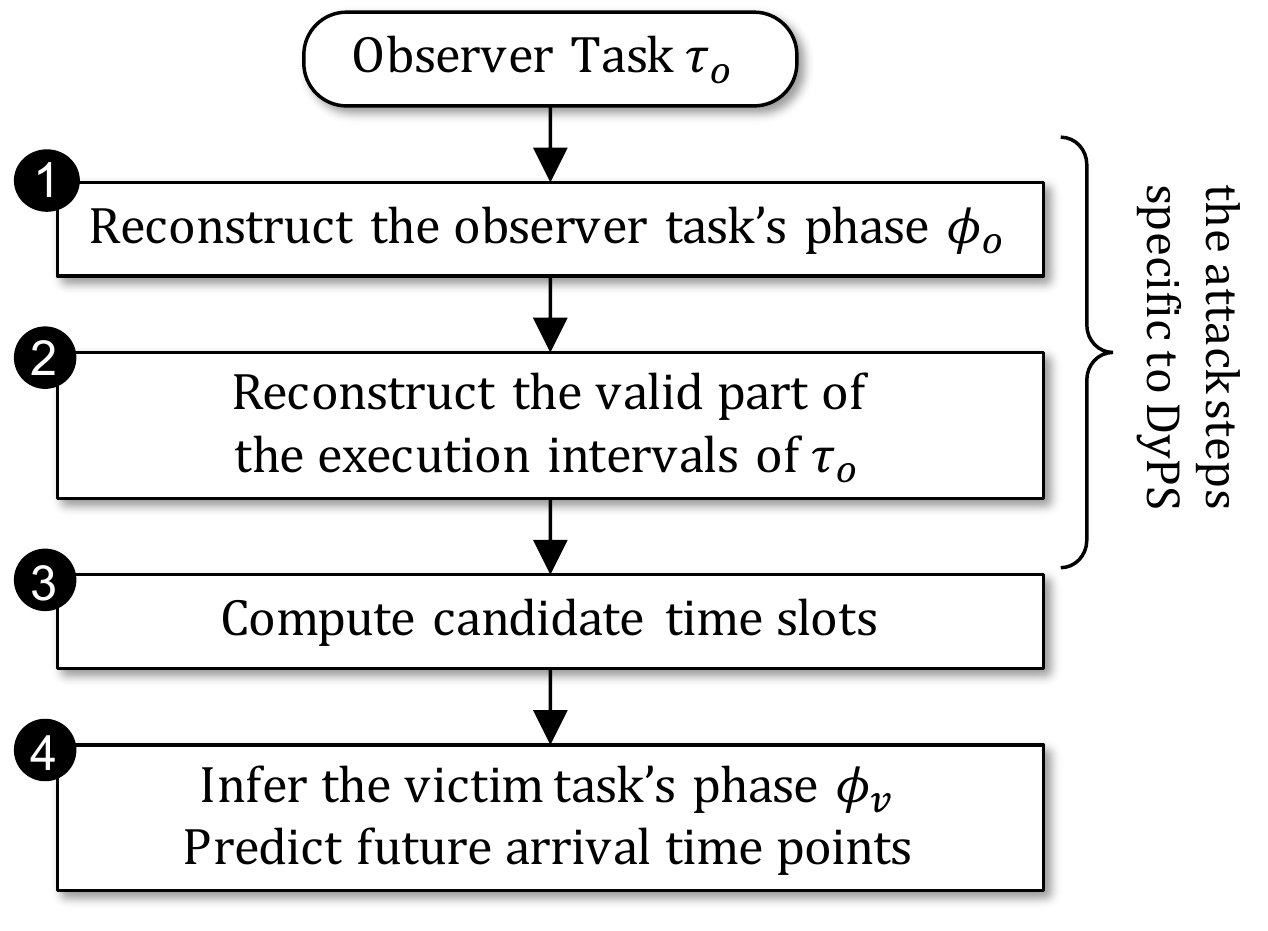}
    \caption{An overview of the attack steps in the \ATTCKNF algorithms. The first two steps are specific to dealing with the dynamic nature in EDF. Step 2 and 3 are identical to ScheduLeak.}
    \label{fig:dyps_steps}
     \vspace{-1.25\baselineskip}
\end{figure}

The rest of this section details every step of the proposed \ATTCKNF algorithms that account for the aforementioned challenges.
An overview of the attack steps is shown in Figure~\ref{fig:dyps_steps} and we present a brief description for the goal in each step. Note that the first two steps are unique to \ATTCKNF for tackling the dynamic nature in EDF, while the last two steps are identical to ScheduLeak~\cite{chen2019novel}, since we build on top of it.

\vspace{0.2\baselineskip}
\noindent
[\textbf{Step 1}] Reconstruct $\phi_o$: the first step is to reconstruct the observer task's phase in order to identify the range specified in Theorem~\ref{th:valid_observation_region}. [Section~\ref{sec:infer_phi_o}]

\vspace{0.2\baselineskip}
\noindent
[\textbf{Step 2}] Reconstruct execution intervals: with the reconstructed $\phi_o$, the observer task then is able to reconstruct only the valid part of the execution intervals for observing the victim task's arrivals. [Section~\ref{sec:reconstruct_execution_intervals}]

\vspace{0.2\baselineskip}
\noindent
[\textbf{Step 3}] Compute candidates: the observer task analyzes the reconstructed execution intervals and compute a list of time points as candidates for the final inference. [Section~\ref{sec:compute_candidates}]

\vspace{0.2\baselineskip}
\noindent
[\textbf{Step 4}] Infer $\phi_v$ and predict future $a_v$: a time point is selected from the candidate list as the inference of the victim task's phase. A future arrival time point of the victim task can then be predicted by using the inferred task phase. [Section~\ref{sec:infer_phiv}]

%\hl{Something to be written here.}
%It's worth noting that 

%================================%
%========== Subsection ==========%
%================================%
\subsection{Reconstructing The Observer Task's Phase}
\label{sec:infer_phi_o}
Reconstructing the observer task's phase, $\phi_o$, can be carried out by the observer task itself.
When a new job of the observer task is scheduled to run, $a_o$ is unknown since there might be higher priority tasks delaying the observer task's execution. However, what the observer task itself can learn is the job's start time $s_i$ (by reading the time stamp right as the job starts) which is bounded by $a_o \leq s_o \leq d_o-C_o$.
Let $\widetilde{\phi_o}$ be the reconstructed task phase of the observer task.
%Ideally, if we know that there exists a job whose $s_o$ is equal to $a_o$, then w
For a given job, we may compute $\widetilde{\phi_o}$ from the start time $s_o$ by
\begin{equation}
\label{eq:infer_phio}
    \widetilde{\phi_o} = s_o \mmod T_o
\end{equation}
where $T_o$ is known to the attacker. %This yields $\widetilde{\phi_o}=\phi_o$ where $\phi_o = a_o \mod T_o$.
Intuitively, the closer $s_o$ is to $a_o$, the more accurate $\widetilde{\phi_o}$ will be.
By collecting and examining more start times, the attacker may further improve $\widetilde{\phi_o}$ by first determining the closest $s_o$ to $a_o$ and then computing $\widetilde{\phi_o}$ using Equation~\ref{eq:infer_phio}. 
When there exists one job whose $s_o$ is equal to $a_o$, the correct task phase can be reconstructed (\ie $\widetilde{\phi_o}=\phi_o$).
In Section~\ref{sec:analysis_phio} we discuss the factors that impact the reconstruction of the task phase and how likely it is for the attacker 
(observer task) to observe a situation where $s_o=a_o$ in a given task set.
We now formally describes the algorithms.

Consider the observer task launching the attack on its $k$-th job ($k$ is an arbitrary number that is unknown to the attacker) and collecting its own job start times for $m$ jobs. What the observer task captures is a set of start times of consecutive jobs $S^{observed}_o=\{s^k_o, s^{k+1}_o,...,s^{k+m-1}_o\}$.
Our goal is to find the start time that is closest to its arrival time.
We do this by comparing the start times using the following proposition:

\begin{proposition}
\label{prop:closer_s}
Given two start times $s^k_o$ and $s^{k+p}_o$ where $p \geq 1$ and thus $s^k_o < s^{k+p}_o$, we can determine the start time that is closer to its arrival time to be
\begin{equation}
    \begin{cases}
    s^k_o & \text{if } s^k_o < s^{k+p}_o-p \cdot T_o\\
    s^{k+p}_o & \text{otherwise.}
\end{cases}
\end{equation}
where $T_o$ and $p$ are known. 
\qedsymbol
\end{proposition}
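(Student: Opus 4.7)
The plan is to reduce the comparison to a statement about each job's ``delay'' --- the nonnegative gap between its start time and its own arrival time --- and then to exploit the exact periodicity of $\tau_o$ to eliminate the unknown arrival times from the decision rule. For any job index $j$, let $\delta^j_o := s^j_o - a^j_o$. By the bound $a^j_o \leq s^j_o$ recorded in Section~\ref{sec:infer_phi_o}, we have $\delta^j_o \geq 0$, and the start time ``closer to its arrival time'' is, by definition, the one with the smaller $\delta^j_o$ (indeed $\widetilde{\phi_o} = \phi_o$ iff some observed job achieves $\delta^j_o = 0$, via Equation~\ref{eq:infer_phio}).

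Because $\tau_o$ is periodic with period $T_o$ and task release jitter is assumed negligible (Section~\ref{sec:system_model}), consecutive arrivals satisfy $a^{k+p}_o = a^k_o + p\cdot T_o$. Substituting into $\delta^{k+p}_o = s^{k+p}_o - a^{k+p}_o$ gives $\delta^{k+p}_o = s^{k+p}_o - a^k_o - p\cdot T_o$. Comparing this with $\delta^k_o = s^k_o - a^k_o$ and cancelling the common term $a^k_o$, the inequality $\delta^k_o < \delta^{k+p}_o$ becomes $s^k_o < s^{k+p}_o - p\cdot T_o$, which is exactly the threshold stated in the proposition. The complementary case $\delta^k_o \geq \delta^{k+p}_o$ yields the ``otherwise'' branch, where $s^{k+p}_o$ is no farther from its own arrival; ties are absorbed into this branch without loss.

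The main obstacle, insofar as there is one, is not the algebra but verifying that the resulting decision rule depends only on quantities visible to the observer task. The derivation above eliminates the unobservable $a^k_o$ entirely, leaving only $s^k_o$, $s^{k+p}_o$, $T_o$, and $p$, all of which are available to the attacker by the adversary model. I would close the proof by emphasizing this observable-only reduction, since it is what makes Proposition~\ref{prop:closer_s} operational: in Step~1 of \ATTCKNF, the observer task can fold each newly observed start time into a running ``best'' start time by a single application of the test, without ever needing to know $\phi_o$ in advance.
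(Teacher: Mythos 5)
Your proof is correct and follows essentially the same route as the paper: the paper likewise shifts $s^{k+p}_o$ back by $p\cdot T_o$ using the exact periodicity $a^{k+p}_o = a^k_o + p\cdot T_o$ so that both start times are measured against the same arrival, and then picks the smaller one since $a_o \leq s_o$. Your rephrasing in terms of the per-job delay $\delta^j_o = s^j_o - a^j_o$ is just a slightly more explicit form of that identical argument, so there is nothing further to add.
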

%\begin{proof}
%\hl{(Proof may be simplified.)}
In the above proposition, the given start time pair, $s^k_o$ and $s^{k+p}_o$, represents two jobs differing in $p$ periods.
Therefore, the start time of the $(k+p)$-th job can be shifted to the same period as the $k$-th job by $s^{k+p}_o-p \cdot T_o$ since the observer task arrives periodically (\ie $a^k_o=a^{k+p}_o-p \cdot T_o$).
As a result, the two start times in the same period ($s^k_o$ and the shifted $s^{k+p}_o-p \cdot T_o$) are comparable.
The smaller one is closer to its arrival time since $a_o \leq s_o$.
%\end{proof}

By employing Proposition~\ref{prop:closer_s}, we can determine the start time that is closest to its arrival time using $S^{observed}_o$. The inference of the task phase for the observer task is computed by
\begin{equation}
\label{eq:infer_phi_o_full}
    \widetilde{\phi_o}=\min(s_o^{k+p}-p \cdot T_o \mid 0 \leq p < m) \mmod T_o
\end{equation}
where $T_o$ is the period known to the attacker and $k$ can be unknown.
Then, given a start time $s_o$, the attacker can compute its projected arrival time $\widetilde{a_o}$ by
\begin{equation}
\label{eq:inferred_a_o}
    \widetilde{a_o}=s_o-\widetilde{j_o}
    %\widetilde{a_o}=s_o-[(s_o \mod T_o)-\widetilde{\phi_o}]\mod T_o
\end{equation}
where 
%$\widetilde{j_o}=[(s_o \mmod T_o)-\widetilde{\phi_o}] \mmod T_o$ 
$\widetilde{j_o}=(s_o-\widetilde{\phi_o}) \mmod T_o$ 
represents the delay such a job may have experienced.

\begin{example}
\label{ex:infer_phi_o}
Consider the observer task $\tau_o$ in a task set of 4 periodic tasks (extended from Example~\ref{ex:counterToObserveTv}) as shown in the table below.

\begin{center}\footnotesize
\vspace{-0.5\baselineskip}
\begin{tabular}{|c||c|c|c|}
\hline 
 & $T_i$ & $C_i$ & $\phi_i$ \\ 
\hline \hline 
$\tau_1$ & 15 & 1 & 3 \\ \hline 
$\tau_o$ & 10 & 4 & 1 \\ \hline 
$\tau_v$ & 8 & 2 & 2 \\ \hline
$\tau_4$ & 6 & 1 & 4 \\ \hline 
%\multicolumn{3}{r}{} \\
\end{tabular} 
%\end{center}
%\vspace{-0.10in}
% \quad
% %\begin{center}\footnotesize
% \begin{tabular}{|c|}
% \hline 
% Reconstructed \\ 
% Start Time Points \\ 
% \hline \hline 
% [0,1) \\ 
% \hline 
% [12,13) \\ 
% \hline 
% [20,21) \\ 
% \hline 
% [30,31) \\ 
% \hline 
% [43,44) \\ 
% \hline 
% \end{tabular} 
%\vspace{-0.5\baselineskip}
\end{center}

Let's assume the system begins at $t=0$
% Let's consider the task set given in Example~\ref{ex:infer_phi_v}.
%Assume that 
and the observer task starts collecting its start times for $10$ instances from $t=41$.
The collected start times are %$S^{observed}_o=
$\{41,53,61,71,81,92,101,111,121,133\}$. 
By using Equation~\ref{eq:infer_phi_o_full}, the observer task's phase $\widetilde{\phi_o}$ can be computed by 
$\min(41,43,41,41,41,42,41,41,41,43) \mmod 10=1$.
%Let's assume that the attacker starts collecting the observer task's start times for $10$ periods since the system begins at the time point $t=0$.
%The collected start times are $S^{observed}_o=\{1,13,21,31,41,53,61,71,81,92\}$. By using Equation~\ref{eq:infer_phi_o_full}, the inference of the observer task's phase can be computed by 
%$\widetilde{\phi_o}=\min(1,3,1,1,1,3,1,1,1,2) \mmod 10=1$.
In this example, $\widetilde{\phi_o}=\phi_o=1$ and thus the correct observer task's phase is obtained.
\qedsymbol
\end{example}

%================================%
%========== Subsection ==========%
%================================%
\subsection{Reconstructing The Observer Task's Execution Intervals}
\label{sec:reconstruct_execution_intervals}
%Now we have gathered components needed to reconstruct the valid execution intervals from the observer task.
Based on Theorem~\ref{th:valid_observation_region} and Equation~\ref{eq:inferred_a_o}, we developed the following proposition to reconstruct the execution intervals in a period:
%Based on Equation~\ref{eq:inferred_a_o}, we modify Proposition~\ref{prop:valid_observation_region} and let the observer task reconstruct the execution intervals within $[\widetilde{a_o}, \widetilde{a_o}+T_o-T_v)$. 

\begin{proposition}
\label{prop:reconstruct_execution_intervals}
Assume that, in a \ATTCKNF attack, the attacker has reconstructed the observer task's phase as $\widetilde{\phi_o}$. 
Then, for a given job of the observer task starting at $s_o$, the observer task reconstructs only the execution intervals within $[\widetilde{a_o}, \widetilde{a_o}+T_o-T_v)$ where $\widetilde{a_o}$ is calculated using Equation~\ref{eq:inferred_a_o}.
\qedsymbol
\end{proposition}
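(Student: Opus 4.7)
The plan is to derive Proposition~\ref{prop:reconstruct_execution_intervals} as a direct corollary of Theorem~\ref{th:valid_observation_region} under the substitution $a_o \mapsto \widetilde{a_o}$. From Theorem~\ref{th:valid_observation_region}, the valid observation window for a job of $\tau_o$ arriving at $a_o$ is exactly $[a_o, a_o+T_o-T_v)$. Since the attacker knows neither $a_o$ nor $\phi_o$ directly, the only operational candidate is the reconstruction $\widetilde{a_o}$ produced by Equation~\ref{eq:inferred_a_o}, so the proposition collapses to justifying this substitution.

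First, I would verify that the substitution is exact whenever the phase reconstruction is correct. Given $s_o$, the actual release-to-start delay satisfies $j_o = s_o - a_o \in [0, T_o)$ by schedulability (the job must start before its next release), and $a_o \equiv \phi_o \pmod{T_o}$ gives $(s_o - \phi_o) \mmod T_o = j_o$. By the definitions $\widetilde{j_o} = (s_o - \widetilde{\phi_o}) \mmod T_o$ and $\widetilde{a_o} = s_o - \widetilde{j_o}$, the assumption $\widetilde{\phi_o} = \phi_o$ immediately yields $\widetilde{a_o} = a_o$, making the reconstructed window coincide with the true one from Theorem~\ref{th:valid_observation_region}.

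Second, I would address the imperfect-reconstruction case. Equation~\ref{eq:infer_phi_o_full} computes $\widetilde{\phi_o}$ as a minimum over shifted observed start times, and since $s_o^{k+p} - p \cdot T_o \geq a_o^k$ always holds, one gets $\widetilde{\phi_o} \geq \phi_o$, hence $\widetilde{j_o} \leq j_o$ and $\widetilde{a_o} \geq a_o$. Thus the reconstructed window can only lag the true valid window, never precede it, so it cannot erroneously label pre-release time as execution. The main subtlety — and the place I would spend most care — is the right end of the window: with $\widetilde{a_o} > a_o$, the interval $[\widetilde{a_o}, \widetilde{a_o}+T_o-T_v)$ extends past the true cutoff $a_o + T_o - T_v$. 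I would argue this does not introduce false negatives because only actually measured execution within the window is fed into later steps; the window merely bounds where the observer is allowed to look. As more jobs are observed, $\widetilde{\phi_o}$ can only decrease toward $\phi_o$ under the $\min$ operator in Equation~\ref{eq:infer_phi_o_full}, tightening the window monotonically, which dovetails with the analysis planned for Section~\ref{sec:analysis_phio}.
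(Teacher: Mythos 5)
Your first step---showing that $\widetilde{\phi_o}=\phi_o$ forces $\widetilde{a_o}=a_o$, so that the reconstructed window coincides with the valid window of Theorem~\ref{th:valid_observation_region}---is exactly the justification the paper intends: the paper states Proposition~\ref{prop:reconstruct_execution_intervals} without a separate proof, treating it as the operational form of Theorem~\ref{th:valid_observation_region} obtained by substituting $\widetilde{a_o}$ for $a_o$, and that part of your argument is sound.

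The gap is in your second part, where you claim the imperfect case $\widetilde{a_o}>a_o$ is harmless because only actually measured execution inside the window is used and ``the window merely bounds where the observer is allowed to look.'' That is backwards for this attack. In Step~3 every time column in which a reconstructed execution interval appears is \emph{eliminated} from the candidate set, so measured execution recorded in the overhang $[a_o+T_o-T_v,\;\widetilde{a_o}+T_o-T_v)$ is precisely the invalid data that Theorem~\ref{th:valid_observation_region} is designed to exclude: in that region the observer job may have an earlier deadline than a newly arrived victim job, its execution can therefore land in the true arrival column of the ladder diagram, and that column is then wrongly discarded---the false-negative scenario described after Example~\ref{ex:counterToObserveTv}. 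The paper does not claim soundness in this case; it explicitly concedes that an incorrect $\widetilde{\phi_o}$ ``may cause some issues'' and falls back on Section~\ref{sec:analysis_phio} and the experiments in Section~\ref{sec:eval_infer_phio} to argue the condition is rare and empirically mild, so your attempt to prove more than the paper asserts rests on a step that fails. Two smaller slips: $\widetilde{\phi_o}\geq\phi_o$ is not literally true because of the $\mmod T_o$ wraparound (the correct statement is $(\widetilde{\phi_o}-\phi_o)\mmod T_o\geq 0$, i.e., the estimate lies to the cyclic right of $\phi_o$, which is how the paper defines $\Delta\widetilde{\phi_o}$); and if a job encountered during Step~2 starts with a smaller delay than any job seen while estimating $\widetilde{\phi_o}$, then $\widetilde{j_o}=(s_o-\widetilde{\phi_o})\mmod T_o$ wraps and $\widetilde{a_o}=s_o-\widetilde{j_o}$ lands a full period early, so $\widetilde{a_o}\geq a_o$ is not guaranteed either.
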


To implement Proposition~\ref{prop:reconstruct_execution_intervals}, we employ a known reconstruction algorithm~\cite[Algorithm 1]{chen2019novel} but make the following modifications:
\ci at the beginning of the execution of each period, we let the observer task compute $\widetilde{a_o}$ by using Equation~\ref{eq:inferred_a_o}; 
\cii the observer task stops reconstructing execution intervals in a period if the current time stamp exceeds $\widetilde{a_o}+T_o-T_v$.

The end result of this step is a set of reconstructed execution intervals, denoted by $E^{recon}_o=\{e^1_o,e^2_o,e^3_o,...\}$ where $e^r_o := [begin^r_o,end^r_o)$ is the $r$-th reconstructed execution interval starting at the time point $begin^r_o$ and ending at the time point $end^r_o$.
Note that the number of reconstructed execution intervals are dependent on the duration of the attack that is determined by the attacker. The impact of the attack duration is evaluated in Section~\ref{sec:eval_infer_phiv}.
%\hl{TODO: Present the detailed algorithm here?}

%================================%
%========== Subsection ==========%
%================================%
\subsection{Computing The Candidates}
\label{sec:compute_candidates}
%Proposition~\ref{prop:reconstruct_execution_intervals} ensures that the reconstructed execution intervals of $\tau_o$ only include intervals where $\tau_o$ has lower priority than the victim task if $\widetilde{\phi_o} = \phi_o$. 
%With these intervals in hand,   approach similar to~\cite[Theorem 1]{chen2019novel} from ScheduLeak becomes applicable here.

To compute the candidate time points for the phase of the victim task, the reconstructed execution intervals are organized on 
a timeline with length equal to the victim task's period $T_v$. 
%a schedule ladder diagram with width equal to the period of the victim task.
To facilitate understanding, let us use the \textit{schedule ladder diagram}~\cite{chen2019novel} (of width $T_v$)
%introduced by Chen \etal \ \cite{chen2019novel} 
to illustrate how the reconstructed execution intervals are processed.
On a schedule ladder diagram, the victim task's arrivals are always present in the same column (since the width equals $T_v$.) 
Let's define such a column as the ``true arrival column'' that has an offset of $\phi_v$ from the leftmost time column. 
As the reconstructed execution intervals of $\tau_o$ are ensured to have priorities lower than the victim task, those execution intervals will not appear in the true arrival column (because otherwise the victim task would have executed instead). 
In other words, the time columns where the reconstructed execution intervals of $\tau_o$ appear even once cannot be the true arrival column.

Given a reconstructed execution interval $e^r_o \in E^{recon}_o$, the time columns where $e^r_o$ is present are determined by $\{t \mmod T_v \mid begin^r_o \leq t < end^r_o \wedge t \in \mathbb{Z} \}$. 
For simplicity, let's define $e^r_o \mmod T_v := \{t \mmod T_v \mid begin^r_o \leq t < end^r_o \wedge t \in \mathbb{Z}\}$.
Therefore, the time columns where the reconstructed execution intervals appear at least once can be calculated by 
%\begin{equation}
$\bigcup_{e^r_o \in E^{recon}_o} (e^r_o \mmod T_v)$
%\end{equation}
which represents a set of ``false'' time columns that do not include the true arrival time column.
Thus, the set of candidate time columns can be obtained by
\begin{equation}
\label{eq:compute_candidates}
    \{col \mid 0 \leq col < T_v \wedge col \in \mathbb{Z}\} \ - \hspace{-0.5em}\bigcup_{e^r_o \in E^{recon}_o} \hspace{-0.5em} (e^r_o \mmod T_v)
\end{equation}

\subsection{Inferring The Victim Task's Phase}
\label{sec:infer_phiv}
Next, we take the beginning of the longest contiguous time columns in the candidate list as the inference of the victim task's phase, $\widetilde{\phi_v}$.
Then, the future arrival time of the victim task can be calculated by $\widetilde{\phi_v} + k \cdot T_v$ where $k$ is the desired arrival number.
Alternatively, given a time point $t$, the subsequent arrival time of the victim task can be predicted by 
\begin{equation}
    t + (\widetilde{\phi_v}-t) \mmod T_v
\end{equation}

\begin{figure}[t]
    \centering
    \includegraphics[width=1\columnwidth]{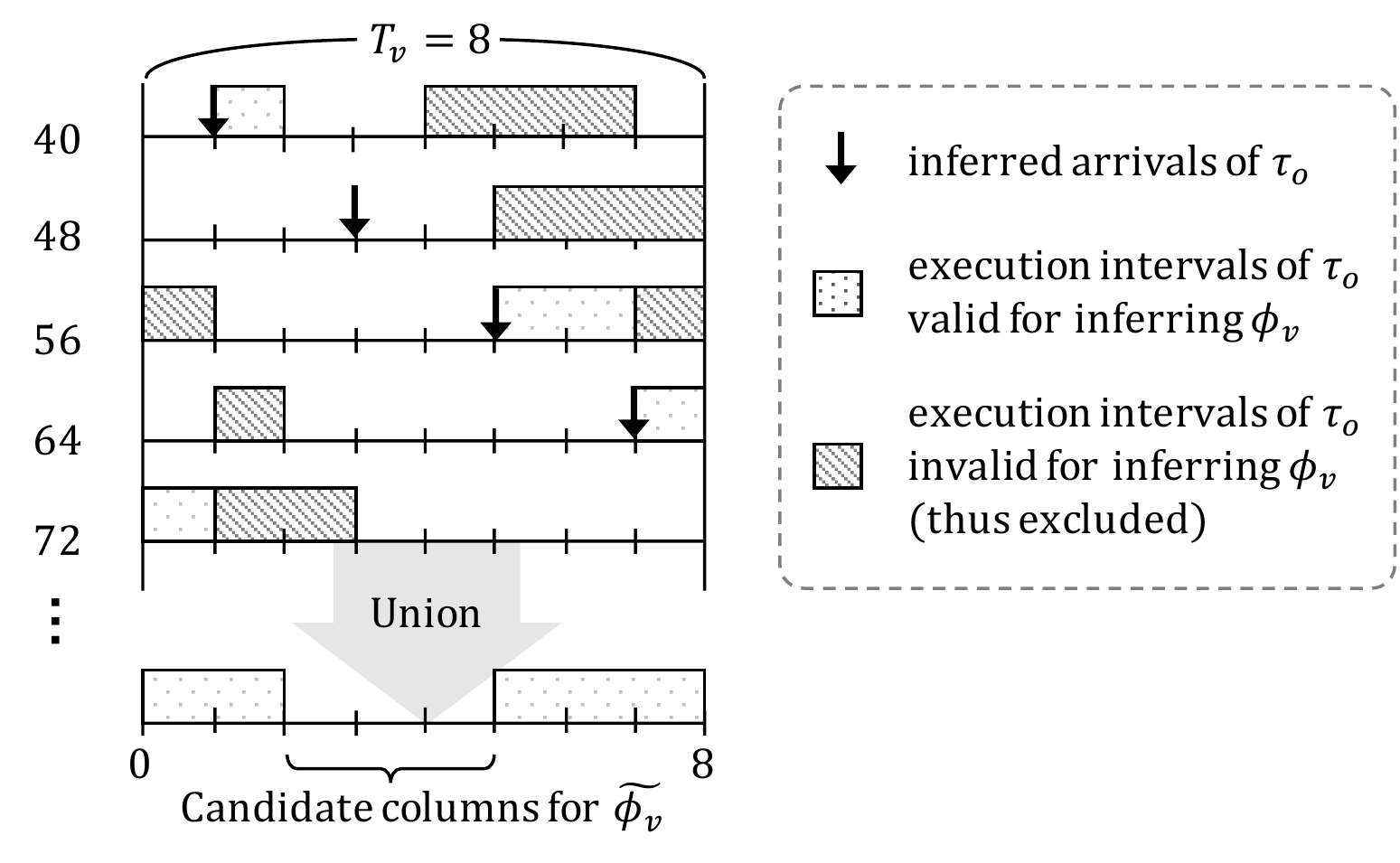}
    \caption{A schedule ladder diagram demonstrated for Example~\ref{ex:infer_phi_v}. Both valid and invalid execution intervals are plotted in the diagram for reference. In the \ATTCKNF algorithms, only the valid execution intervals are taken into account for inferring the victim task's phase (Proposition~\ref{prop:reconstruct_execution_intervals}.)}
    \label{fig:ex_infer_phi_v}
     \vspace{-1.1\baselineskip}
\end{figure}

\begin{example}
\label{ex:infer_phi_v}
Let's consider the task set from Example~\ref{ex:infer_phi_o}.
Assume that the observer task has collected its execution intervals for a duration of $LCM(T_o,T_v)$\footnote{As we will see in Section~\ref{sec:eval_infer_phiv}, the attack duration is evaluated with using $LCM(T_o,T_v)$ as an unit since the offset between the arrivals of $\tau_o$ and $\tau_v$ resets every $LCM(T_o,T_v)$.} (\ie $40$ time units in this example) since the job starts at $t=41$.
The reconstructed execution intervals are $E^{recon}_o=\{[41,42), [61,63), [71,73)\}$ and they correspond to the time columns $\{1\}$, $\{5,6\}$ and $\{0,7\}$, respectively.
Figure~\ref{fig:ex_infer_phi_v} displays the reconstructed execution intervals on a schedule ladder diagram and the timeline at the bottom shows the union of the time columns, $\{0,1,5,6,7\}$, in which the reconstructed execution intervals appear at least once.
The candidate time columns are then computed as $\{0,...,7\}-\{0,1,5,6,7\}=\{2,3,4\}$ (Equation~\ref{eq:compute_candidates}) and the inference is determined as $\widetilde{\phi_v}=2$ (\ie the first time column of the longest contiguous time columns, $\{2,3,4\}$) which matches the ground truth, $\phi_v=2$.
\qedsymbol
\end{example}

It is worth mentioning that the correct $\phi_v$ may not be inferred in Example~\ref{ex:infer_phi_v} if Proposition~\ref{prop:reconstruct_execution_intervals} is not enforced.
If all the execution intervals (both valid and invalid execution intervals in Figure~\ref{fig:ex_infer_phi_v}) are considered when calculating the union of the time columns, the true arrival column will be excluded from the candidate time columns.

The aforementioned situation may happen if an incorrect task phase for the observer task is reconstructed in the first place.
While it may cause some issues when $\widetilde{\phi_o} \neq \phi_o$, our analysis in Section~\ref{sec:analysis_phio} shows that it can happen only under certain rare conditions. The experimental results presented in Section~\ref{sec:eval_infer_phio} further show that the attacker can get a high inference precision even in those conditions due to the presence of run-time variations.

%================================%
%========== Subsection ==========%
%================================%
% \subsection{The \ATTCKNF Algorithms}
% \label{sec:attack_summary}
% \hl{TODO: To be done.}

% \input{Sections/reconstruct_observer_phase.tex}
\section{Analysis}
\label{sec:analysis}
%\todo[inline]{This title may need to be changed}

\begin{figure}[t]
    \centering
    \includegraphics[width=0.62\columnwidth]{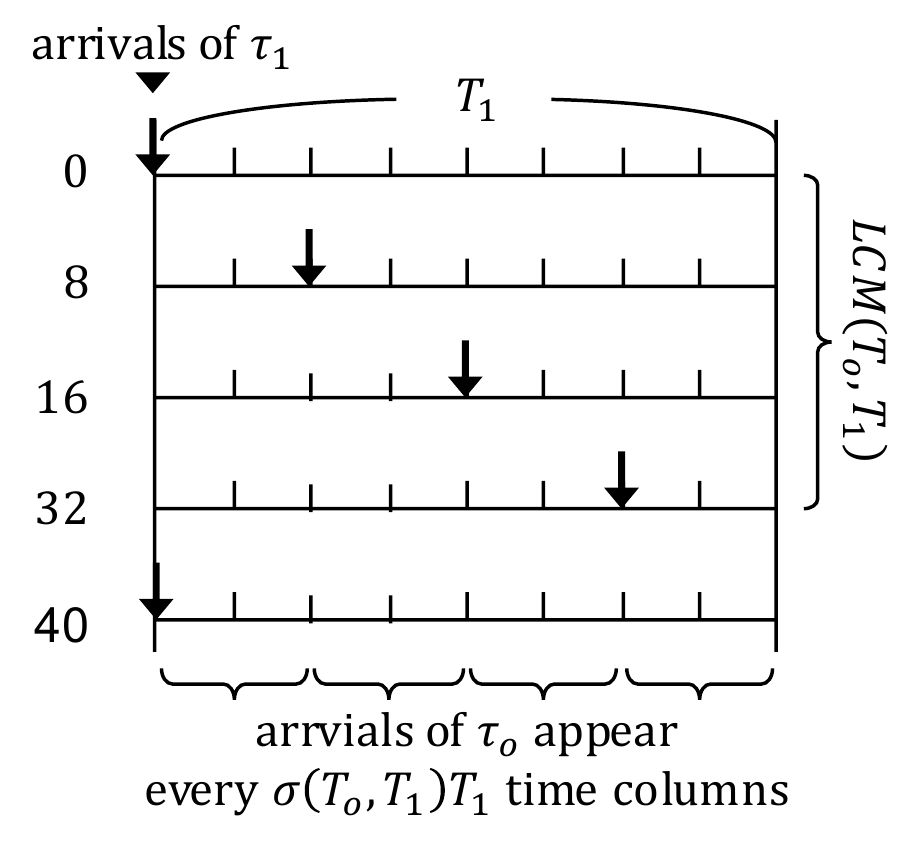}
    \caption{An example of the arrival time correlation between $\tau_o$ and $\tau_1$ for Theorem~\ref{th:a_o_interfered_by_one_task}. In this example, an observer task $\tau_o$ ($T_o=10$) and a periodic task $\tau_1$ ($T_1=8$) are considered. It shows that the projected arrivals of $\tau_o$ appear every $\sigma(T_o,T_1)T_1=2$ time columns.}
    \label{fig:ex_equal_arrival_offset}
     \vspace{-1.1\baselineskip}
\end{figure}

\begin{figure*}[t]
    \centering
    \begin{subfigure}[t]{0.99\textwidth}
        \centering
        \includegraphics[width=0.9\textwidth]{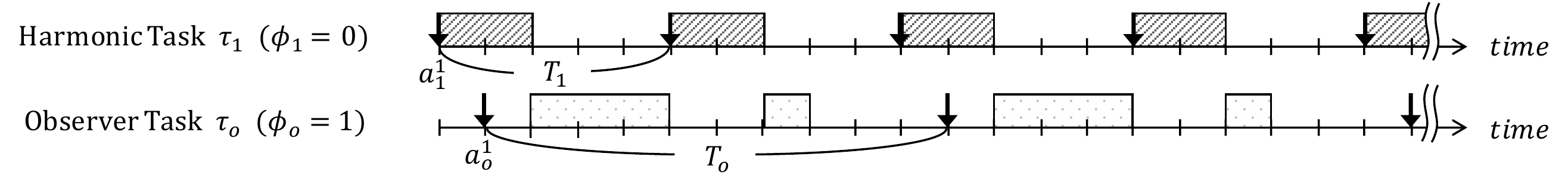}
        %\vspace{-1\baselineskip}
        \caption{When $\phi_1=0$ and $\phi_o=1$, all the arrivals of $\tau_o$ experience interference due to the harmonic task $\tau_1$.}
    \end{subfigure}%  
    %\hfill  
    \vspace{1\baselineskip}
    \begin{subfigure}[t]{0.99\textwidth}
        \centering
        \includegraphics[width=0.9\textwidth]{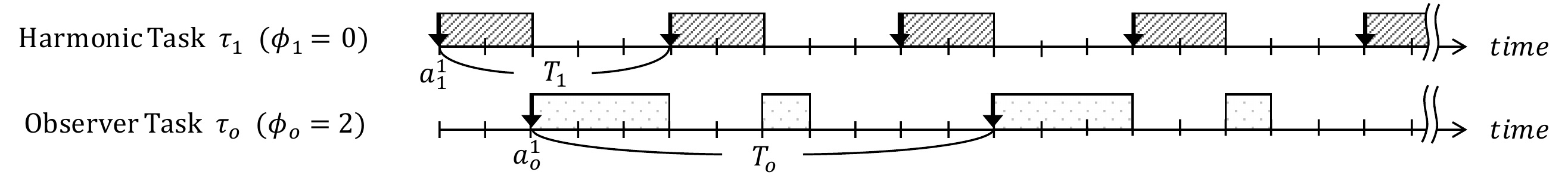}
        %\vspace{-1.3\baselineskip}
        \caption{When $\phi_1=0$ and $\phi_o=2$, no arrival of $\tau_o$ experiences interference due to the harmonic task $\tau_1$.}
	%\label{fig:arrv_seg_b}
    \end{subfigure}
    %\vspace{-0.3\baselineskip}
    \caption{Schedules of the task set $\Gamma=\{\tau_o, \tau_1\}$ given in Example~\ref{ex:psi_with_harmonic_task}. The $T_o$ and $T_i$ are in harmony and thus $\psi(\tau_o,\tau_1)=1$.}
    \label{fig:ex_interference_of_one_harmonic_task}
    %\vspace{-0.5\baselineskip}
     \vspace{-1\baselineskip}
\end{figure*}

%================================%
%========== Subsection ==========%
%================================%
\subsection{Impact on The Reconstruction of Observer Task's Phase}
\label{sec:analysis_phio}
In Section~\ref{sec:infer_phi_o} we presented algorithms that can reconstruct the observer task's phase by using the start times of \textit{its own jobs}.
When there exists at least one job of the observer task whose arrival time equals its start time (\ie the job is not delayed by any higher priority task and thus $a_o=s_o$), the correct task phase can be reconstructed.
On the other hand, the correct inference cannot be made if the start of the observer task jobs consistently experience delays in every period. 
%Yet, this would mean that there exist some execution intervals that constantly appear at the observer task's arrival time points and have priorities higher than the observer task. 
Here we explore and characterize the factors that may contribute to the delays in the observer task's start times by analyzing a simple task set with just two tasks.
%We start off by considering the interference from one task.
%In this section we study how a simplest task the factors that may contribute to the delays in the observer task's start times.
%\hl{(CY's Note: removed the condition that says only $\tau_i$, $T_i<T_o$, can influence the arrivals of $\tau_o$.)}

\begin{theorem}
\label{th:a_o_interfered_by_one_task}
Given an observer task $\tau_o$ and a periodic task $\tau_i$, %with a smaller period (\ie $T_i<T_o$), 
the maximum proportion of the arrivals of $\tau_o$ that may experience interference solely due to task $\tau_i$ is computed by 
\begin{equation}
\label{eq:psi_of_taski}
    \psi(\tau_o,\tau_i)= \left\lceil \frac{u_i}{\sigma_{(\tau_o,\tau_i)}} \right\rceil \sigma_{(\tau_o,\tau_i)}
\end{equation}
where $u_i=\frac{C_i}{T_i}$ is the utilization of $\tau_i$, $\sigma_{(\tau_o,\tau_i)}=\frac{T_o}{LCM(T_o,T_i)}$ is the inverse of the number of the arrivals of $\tau_o$ in a $LCM(T_o,T_i)$ and the resulting $\psi(\tau_o,\tau_i)$ is a fraction in the range  $0 < \psi(\tau_o,\tau_i) \leq 1$.
\end{theorem}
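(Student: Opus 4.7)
The plan is to reduce the claim to a counting problem on one hyperperiod $H=LCM(T_o,T_i)$: how many arrivals of $\tau_o$ can be made to fall inside a worst-case execution window of $\tau_i$? Over $H$, $\tau_o$ releases $H/T_o = 1/\sigma_{(\tau_o,\tau_i)}$ jobs; taken modulo $T_i$, these form a coset of the cyclic subgroup generated by $T_o \bmod T_i$, of order $T_i/\gcd(T_o,T_i)$ and common difference $\gcd(T_o,T_i)$. Using the identity $T_o T_i = \gcd(T_o,T_i)\cdot LCM(T_o,T_i)$, this difference equals $\sigma_{(\tau_o,\tau_i)} T_i$, so the projected arrivals occupy exactly $1/\sigma_{(\tau_o,\tau_i)}$ distinct, evenly-spaced positions inside $[0, T_i)$, precisely as Figure~\ref{fig:ex_equal_arrival_offset} illustrates. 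In the worst case $\tau_i$'s job occupies the prefix $[\phi_i,\phi_i+C_i)$ of each period, and $\tau_i$ interferes with an arrival $a_o$ precisely when $(a_o-\phi_i) \bmod T_i \in [0,C_i)$; when $T_o > T_i$, the EDF priority condition $d_i \leq d_o$ is automatic for any $\tau_i$-job still active at $a_o$, so no further pruning is needed.

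The counting step is then direct: of the $1/\sigma_{(\tau_o,\tau_i)}$ evenly-spaced positions in $[0, T_i)$, at most $\lceil C_i/\gcd(T_o,T_i) \rceil$ can be packed inside any length-$C_i$ window, and this bound is attained by choosing the relative phase $\phi_o - \phi_i$ so that the first projected position sits at $0$. Rewriting $C_i/\gcd(T_o,T_i) = u_i/\sigma_{(\tau_o,\tau_i)}$ and dividing by the total $1/\sigma_{(\tau_o,\tau_i)}$ yields $\psi(\tau_o,\tau_i) = \sigma_{(\tau_o,\tau_i)} \lceil u_i/\sigma_{(\tau_o,\tau_i)} \rceil$; the schedulability bound $u_i \leq 1$ then forces $\psi \leq 1$.

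The main obstacle I anticipate is stitching together the two ``worst cases'' cleanly: simultaneously assuming $\tau_i$ actually executes for its full length-$C_i$ prefix and that $\phi_o - \phi_i$ is chosen to pack the maximum number of $\tau_o$-arrivals into that prefix. Tightness hinges on the equidistribution of the mod-$T_i$ residues, which is why the number-theoretic identity $\gcd \cdot LCM = T_o T_i$ is the load-bearing ingredient. A secondary subtlety worth flagging is the regime $T_o < T_i$: the EDF priority condition can rule out some counted positions there, so the formula should be read as an upper bound, consistent with the ``may experience interference'' wording of the theorem.
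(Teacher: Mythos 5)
Your proposal is correct and follows essentially the same route as the paper: project the $1/\sigma_{(\tau_o,\tau_i)}$ arrivals of $\tau_o$ per hyperperiod onto a window of length $T_i$, observe they are equally spaced by $\sigma_{(\tau_o,\tau_i)}T_i$, and count how many fit in a length-$C_i$ execution window, with the maximum $\lceil u_i/\sigma_{(\tau_o,\tau_i)}\rceil$ attained when the phases align. The only difference is that you derive the equidistribution from the $\gcd\cdot LCM$ identity where the paper cites a prior observation, and you add a (correct) remark that the EDF priority condition is automatic when $T_o>T_i$ and that the formula is an upper bound otherwise.
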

\begin{proof}
Since both tasks are periodic, the schedule of the two tasks repeats after their \textit{least common multiple}, $LCM(T_o, T_i)$. There are $\frac{LCM(T_o,T_i)}{T_o}$ arrivals of $\tau_o$ in a $LCM(T_o, T_i)$. By projecting these arrivals of $\tau_o$ onto a timeline with length equal to the period of $\tau_i$, an arrival of $\tau_o$ appears every $\frac{1}{\frac{LCM(T_o,T_i)}{T_o}} \cdot T_i = \sigma_{(\tau_o,\tau_i)} T_i$ time units and repeats after each $LCM(T_o, T_i)$ \cite[Observation 2]{chen2019novel}, as illustrated in Figure~\ref{fig:ex_equal_arrival_offset}.
The most number of arrivals of $\tau_o$ that may experience interference due to the execution of $\tau_i$ in one $LCM(T_o, T_i)$ is computed by $\left \lceil \frac{C_i}{\sigma_{(\tau_o,\tau_i)} T_i} \right \rceil = \left \lceil \frac{u_i}{\sigma_{(\tau_o,\tau_i)}} \right \rceil$ which happens when there exists an instant at which both tasks arrive at the same time, \eg when $\phi_o = \phi_i$ or $(\phi_o \mmod T_i) = \phi_i$. 
In this case, the maximum proportion of the arrivals of $\tau_o$ may be interfered by $\tau_i$ in one $LCM(T_o, T_i)$ can be computed by 
$\psi(\tau_o,\tau_i) = \left\lceil \frac{u_i}{\sigma_{(\tau_o,\tau_i)}} \right\rceil \sigma_{(\tau_o,\tau_i)}$.
Since this relation is the same across all $LCM(T_o, T_i)$, the calculated $\psi(\tau_o,\tau_i)$ applies to the whole schedule. 
\end{proof}

When $\psi(\tau_o,\tau_i)<1$, it means that at least $1-\psi(\tau_o,\tau_i)$ of the arrivals of $\tau_o$ are not interfered by $\tau_i$. % (and thus $a_o=s_o$ for those intact instances).
In contrast, when $\psi(\tau_o,\tau_i)=1$, it is possible for the execution of $\tau_i$ to interfere all the arrivals of $\tau_o$, which would result in an inaccurate $\widetilde{\phi_o}$.
Intuitively, it can happen when $\tau_i$ has a period in harmony with that of $\tau_o$.  An example is given below.

\begin{example}
\label{ex:psi_with_harmonic_task}
Consider an observer task $\tau_o$ ($T_o=10$, $C_o=4$) and a task $\tau_1$ ($T_1=5$, $C_1=2$). The two tasks are in harmony because $T_o \mmod T_1=0$. Then $\psi(\tau_o,\tau_1)$ is computed by
\begin{equation*}
\sigma_{(\tau_o,\tau_1)}=\frac{T_o}{LCM(T_o,T_1)}=1
\end{equation*}
\begin{equation*}
\psi(\tau_o,\tau_1)= \left\lceil \frac{u_1}{\sigma_{(\tau_o,\tau_1)}} \right\rceil \sigma_{(\tau_o,\tau_1)}=\left\lceil u_1 \right\rceil=1
\end{equation*}
which indicates that all the arrivals of $\tau_o$ may experience interference due to $\tau_1$. 
Figure~\ref{fig:ex_interference_of_one_harmonic_task} illustrates two possible schedules for the given task set. Figure~\ref{fig:ex_interference_of_one_harmonic_task}(a) shows the case where $\phi_o=1$ and $\phi_1=0$, all the arrivals of $\tau_o$ are interfered by $\tau_1$.
Figure~\ref{fig:ex_interference_of_one_harmonic_task}(b) shows the case where $\phi_o=2$ and $\phi_1=0$, all the arrivals of $\tau_o$ are no longer interfered by $\tau_1$.
\qedsymbol

\end{example}

This example shows a crucial fact that $\psi(\tau_o,\tau_i)$ only represents the upper bound of the interference (when only one task is considered). That is, having $\psi(\tau_o,\tau_i)=1$ does not mean all the arrivals of $\tau_o$ are absolutely interfered. 
With the same task set but different task phases (which can vary across systems and every time the system restarts), the impact of the interference can be quite different, as shown in Example~\ref{fig:ex_interference_of_one_harmonic_task}.

On the other hand, the schedules presented in Figure~\ref{fig:ex_interference_of_one_harmonic_task} are generated based on WCETs. However, the actual execution times at run-time in real systems can vary throughout. As a result, the run-time task utilization is in fact smaller and, based on Equation~\ref{eq:psi_of_taski}, the proportion of arrivals of the observer task is also smaller. 
Taking the schedule in Figure~\ref{fig:ex_interference_of_one_harmonic_task}(a) as an example, if the first instance's execution time of $\tau_1$ is $c_1^1=1$ (rather than its WCET), then the arrival $a_o^1$ of the observer task will not experience any interference. In such a case, $s_o^1=a_o^1$ and reconstructing the correct task phase becomes possible.
Therefore, the actual impact is highly dependent on the task phases as well as the run-time variations.

\subsection{Coverage Ratio in EDF}
The coverage ratio $\mathbb{C}(\tau_o,\tau_v)=\frac{C_o}{GCD(T_o,T_v)}$ \cite[Definition 1]{chen2019novel} is used 
%by Chen \etal~
to estimate the proportion of the time columns that can be covered by the execution of the observer task in the FP RTS. 
When a given observer task and victim task pair satisfies $\mathbb{C}(\tau_o,\tau_v) \geq 1$, the execution of the observer task may cover all the time columns on the schedule ladder diagram and observing the victim task's arrivals is possible. 
Since the coverage ratio and the inference precision have a positive correlation, it is useful for evaluating the attacker's capabilities against the victim task. 
While the idea of the coverage ratio can be applied in the case of \ATTCKNF, the calculation must be revised to reflect the scope of the reconstructed execution intervals specified in Proposition~\ref{prop:reconstruct_execution_intervals} before it can be used. 
Therefore, we redefine the the coverage ratio for \ATTCKNF as follows:

\begin{definition}
($\mathbb{C}_{\ATTCKNF}(\tau_o,\tau_v)$ \ATTCKNF Coverage Ratio)
The coverage ratio of \ATTCKNF in EDF, denoted by $\mathbb{C}_{\ATTCKNF}(\tau_o,\tau_v)$, is computed by 
\begin{equation}
    \mathbb{C}_{\ATTCKNF}(\tau_o,\tau_v) = \frac{\min(C_o,T_o-T_v)}{GCD(T_o,T_v)}
\end{equation}
%
%where $\min(C_o,T_o-T_v)$ is the minimum 
It represents the proportion of the time columns where the observer task's (valid) execution can
potentially be present in the schedule ladder diagram. If all $T_v$ time columns can be covered by the observer task, then $\mathbb{C}_{\ATTCKNF}(\tau_o,\tau_v) \geq 1$. Otherwise $0 < \mathbb{C}_{\ATTCKNF}(\tau_o,\tau_v) < 1$.
\qedsymbol
\end{definition}

\section{Evaluation}
\label{sec:eval}

%================================%
%========== Subsection ==========%
%================================%
\subsection{Evaluation Metrics}
There are mainly two attack stages in the \ATTCKNF algorithms: \ci reconstructing $\phi_o$ for determining valid execution intervals and \cii inferring $\phi_v$ for predicting future arrival time points.
While both stages target the computing of a task's phase, they have very different characteristics due to how they are inferred. We use two different metrics to evaluate the results from the two stages as defined next.

\subsubsection{Reconstructing The Observer Task Phase $\phi_o$}
As introduced in Section~\ref{sec:infer_phi_o}, the observer task's phase is reconstructed based on the collected start times  %(Equation~\ref{eq:infer_phi_o_full})
where $s_o \geq a_o$ (\ie the start times are always on the right of the corresponding arrival times), thus the distance between $\widetilde{\phi_o}$ and $\phi_o$, denoted by $\Delta \widetilde{\phi_o}=(\widetilde{\phi_o}-\phi_o) \mmod T_o$, should always be positive.
%The purpose of $\widetilde{\phi_o}$ is to compute the projected arrival time $\widetilde{a_o}$ for a given start time $s_o$. Based on Equation~\ref{eq:infer_phio} and \ref{eq:inferred_a_o}, the projected arrival time is always on the right of the true arrival time point (\ie $\widetilde{a_o} \geq a_o$.) 
Based on this fact, we define the error ratio for $\widetilde{\phi_o}$ as follows:

\begin{definition}
($\mathbb{E}^o$ Error Ratio of $\widetilde{\phi_o}$)
The error ratio of $\widetilde{\phi_o}$, denoted by $\mathbb{E}^o$, is computed by
\begin{equation}
   \mathbb{E}^o = \frac{\Delta \widetilde{\phi_o}}{T_o}
\end{equation}
where $\Delta \widetilde{\phi_o}=(\widetilde{\phi_o}-\phi_o) \mmod T_o$ represents the distance between $\phi_o$ and a projected $\widetilde{\phi_o}$ on its right. The resulting $\mathbb{E}^o$ value is a real number in the range $0 \leq \mathbb{E}^o \leq 1$.
A smaller $\mathbb{E}^o$ means that the reconstructed $\widetilde{\phi_o}$ has less error when compared to the true $\phi_o$.
%time units that $\widetilde{\phi_o}$ has to move to the left to be correct (since the inferred arrival times are always bigger than the true arrival times) and is computed by
%$\Delta \widetilde{\phi_o}=(\widetilde{\phi_o}-\phi_o) \mmod T_o$
\qedsymbol
\end{definition}

Note that $\mathbb{E}^o$ is bounded by $1$ because the start times used for computing $\widetilde{\phi_o}$ are bounded by $a_o \leq s_o < a_o+T_o$.

\subsubsection{Inferring The Victim Task Phase $\phi_v$}
The second stage of the attack in \ATTCKNF is to infer the task phase of the victim task. %, which is similar to the ScheduLeak attack in the FP RTS.
In contrast to the observer task's phase, we are only concerned with how close the inference $\widetilde{\phi_v}$ is to the actual $\phi_v$ and $\widetilde{\phi_v}$ can be on either side of $\phi_v$. 
Inference precision \cite[Definition 2]{chen2019novel} of $\widetilde{\phi_v}$ was introduced to evaluate the effectiveness of scheduler side-channel attacks.
Here, we use the same metric but with a clearer equation for evaluation.
The metric is defined as follows.

\begin{definition}
($\mathbb{I}_v^o$ Inference Precision of $\widetilde{\phi_v}$) 
The inference precision, denoted by $\mathbb{I}_v^o$, is computed by
%\vspace{-0.5\baselineskip}
\begin{equation}
%\mathbb{I}_v^o = \frac{\left | \epsilon - \frac{T_v}{2} \right |}{\frac{T_v}{2}}
\mathbb{I}_v^o = \left | \frac{ \epsilon }{\frac{T_v}{2}} - 1 \right |
% \mathbb{I}_v^o = 
% \begin{cases}
%     1 - \frac{p_v - \epsilon}{\frac{p_v}{2}} & \text{if } \epsilon > \frac{p_v}{2}\\
%     1 - \frac{\epsilon}{\frac{p_v}{2}}       & \text{otherwise}
% \end{cases}
\end{equation}
where $\epsilon = \left | \widetilde{\phi_v} - \phi_v \right |$.
The resulting $\mathbb{I}_v^o$ value is a real number in the range $0 \leq \mathbb{I}_v^o \leq 1$. A larger $\mathbb{I}_v^o$ indicates that the inference $\widetilde{\phi_v}$ is more precise in inferring $\phi_v$.
% \begin{enumerate}
% \item If $\epsilon > \frac{p_v}{2}$ then $\mathbb{I}_v^o = 1 - \frac{p_v - \epsilon}{\frac{p_v}{2}}$
% \item else $\mathbb{I}_v^o = 1 - \frac{\epsilon}{\frac{p_v}{2}}$
% \end{enumerate}
\qedsymbol
\end{definition}

%\hl{TODO: See if we need success rate in this paper.}
%Say why we don't use success rate in this paper.}

%================================%
%========== Subsection ==========%
%================================%
\subsection{Evaluation Setup}
\label{sec:eval_setup}
The \ATTCKNF algorithms are tested using synthetic task sets that are grouped by utilization from $\{[0.001+0.1\cdot x,0.1+0.1\cdot x) \mid 0 \leq x \leq 9 \wedge x \in \mathbb{Z}\}$. 
Each group contains 6 subgroups that have a fixed number of tasks from $\{5,7,9,11,13,15\}$. A total of 100 task sets are generated for each subgroup.
The utilization of each individual task in a task set is generated from a uniform distribution by using the UUniFast algorithm \cite{uunifast}.
For each task in a task set, the period $T_i$ is randomly drawn from $[100,1000]$ and the worst-case execution time $C_i$ is computed based on the generated task utilization and period.
The task phase is randomly selected from $[0,T_i)$.
%We only generate the task sets that are schedulable under FP RTS with the Rate Monotonic (RM) priority assignment algorithm (hence, the task sets are also schedulable under the EDF scheduling.) This allows us to use the same set of task sets to test both ScheduLeak in FP RTS and \ATTCKNF in EDF RTS.

The observer task and the victim task in a task set are selected from the generated tasks based on the task periods.
%as we assume that $T_o>T_v$. 
To illustrate, let us consider a task set consisting of $n$ tasks $\Gamma=\{\tau_1,\tau_2,...\tau_n\}$ whose task IDs are ordered by their periods (\ie $T_1>T_2>...>T_n$). The observer task is then selected as the $(\left \lfloor \frac{n}{3} \right \rfloor + 1)$-th task
and the victim task is selected as the $(n- \left \lfloor \frac{n}{3} \right \rfloor)$-th task.
This assignment ensures that $T_o>T_v$ (an assumption from Theorem~\ref{th:ToObserveTv}) and that there exist other tasks with diverse periods (\ie some with smaller periods and some with larger periods compared to $T_o$ and $T_v$.)
%This also allows us to use the same task set with the same observer task and victim task to compare the attack result of the ScheduLeak in FP RTS.
The task sets are generated with $\mathbb{C}_{\ATTCKNF}\geq 1$ (except for the experiment that evaluates the impact of the \ATTCKNF coverage ratio). It is to examine the performance of the \ATTCKNF algorithms under the best case (\ie all $T_v$ time columns on the schedule ladder diagram may be covered by the observer task's execution.)

\begin{figure}[t]
    \centering
    \includegraphics[width=0.75\columnwidth]{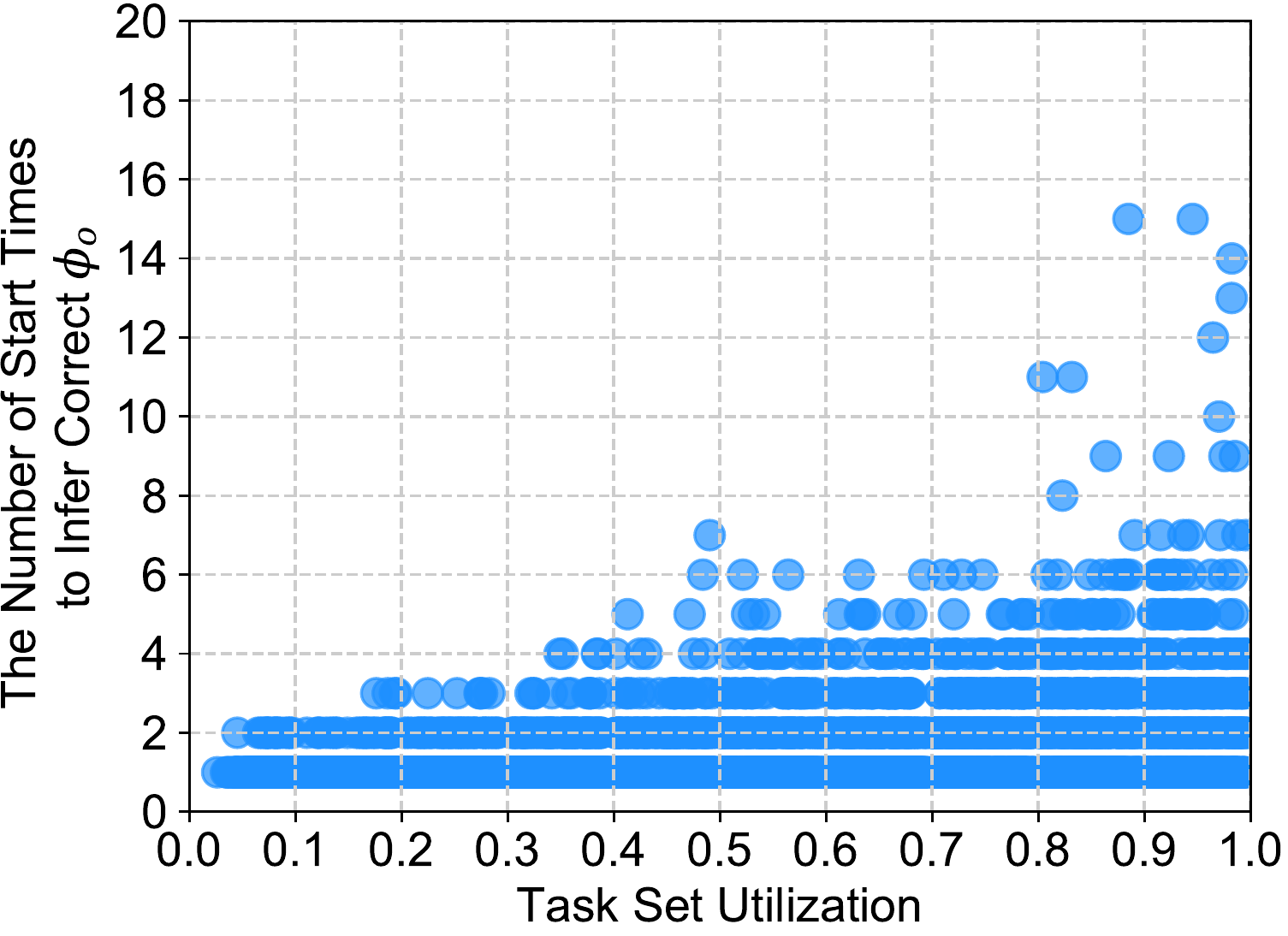}
    \caption{The number of start times the \ATTCKNF algorithms needs for processing each of the 6000 task sets (without harmonic tasks) to get the observer task's correct phase. Each dot in the figure represents the result of a task set. It shows that more start times are needed when the task set utilization is higher. Yet, the worst case (\ie 15 start times which corresponds to $15 \cdot T_o$ attack duration) is reasonably small and manageable.}
    \label{fig:utilvsphio}
    \vspace{-1.2\baselineskip}
\end{figure}

In each task set, $50\%$ of the tasks are configured as sporadic tasks and the rest are periodic tasks. Thus, there are $\left \lfloor \frac{n}{2} \right \rfloor$ sporadic tasks in each task set. The generated task periods are used as the minimum inter-arrival times for the sporadic tasks.
To vary the inter-arrival times for the sporadic tasks at run-time, we use a Poisson distribution in which the probability of each drawn occurrence (\ie each inter-arrival time) is independent.
During the simulation, a Poisson distribution with $\lambda=T_i \cdot 120\%$ as its mean value is used to generate the different inter-arrival times for a sporadic task $\tau_i$.
To satisfy the given minimum inter-arrival time, we adjust the varied inter-arrival time to be $T_i$ if it becomes smaller than $T_i$.

To generate the different task execution times at run-time, a normal distribution is employed.
For a task $\tau_i$, we fit a normal distribution $\mathcal{N}(\mu=C_i\cdot 80\%, \sigma^2)$ with which the cumulative probability for $c_i \leq C_i$ is $99.99\%$.
When a generated execution time exceeds its $C_i$, we adjust it to be $C_i$ to ensure the schedulability of the task set.

%================================%
%========== Subsection ==========%
%================================%
\subsection{Simulation Results}

\subsubsection{Reconstructing The Observer Task's Phase $\phi_o$}
\label{sec:eval_infer_phio}

As introduced in Section~\ref{sec:infer_phi_o}, the observer task's phase has to be reconstructed before inferring the victim task's phase. Therefore, we evaluate the factors that impact the reconstruction of the observer task's phase.

We first test the \ATTCKNF algorithms without any tasks that are in harmony with the observer task. 
%introduced in Section~\ref{sec:eval_setup} 
%without tasks in harmony with the observer task.
The result shows that the correct $\phi_o$ (\ie $\widetilde{\phi_o} = \phi_o$ or $\mathbb{E}^o=0$) can be reconstructed in all of the tested 6000 task sets (without harmonic tasks). 
The number of start times that the \ATTCKNF algorithms process to get $\mathbb{E}^o=0$ is plotted in Figure~\ref{fig:utilvsphio}.
The figure shows a trend that it requires more start times for the \ATTCKNF algorithms to reconstruct correct $\phi_o$ when the utilization is higher.
This is because a higher utilization implies higher chances of preemptions and delays in task executions.
Nevertheless, even in the worst case, of 6000 tested task sets, a reasonably small number of start times were required (\ie 15 start times). This shows that the \ATTCKNF algorithms are practical for reconstructing $\phi_o$.

\begin{figure}[t]
    \centering
    \includegraphics[width=0.8\columnwidth]{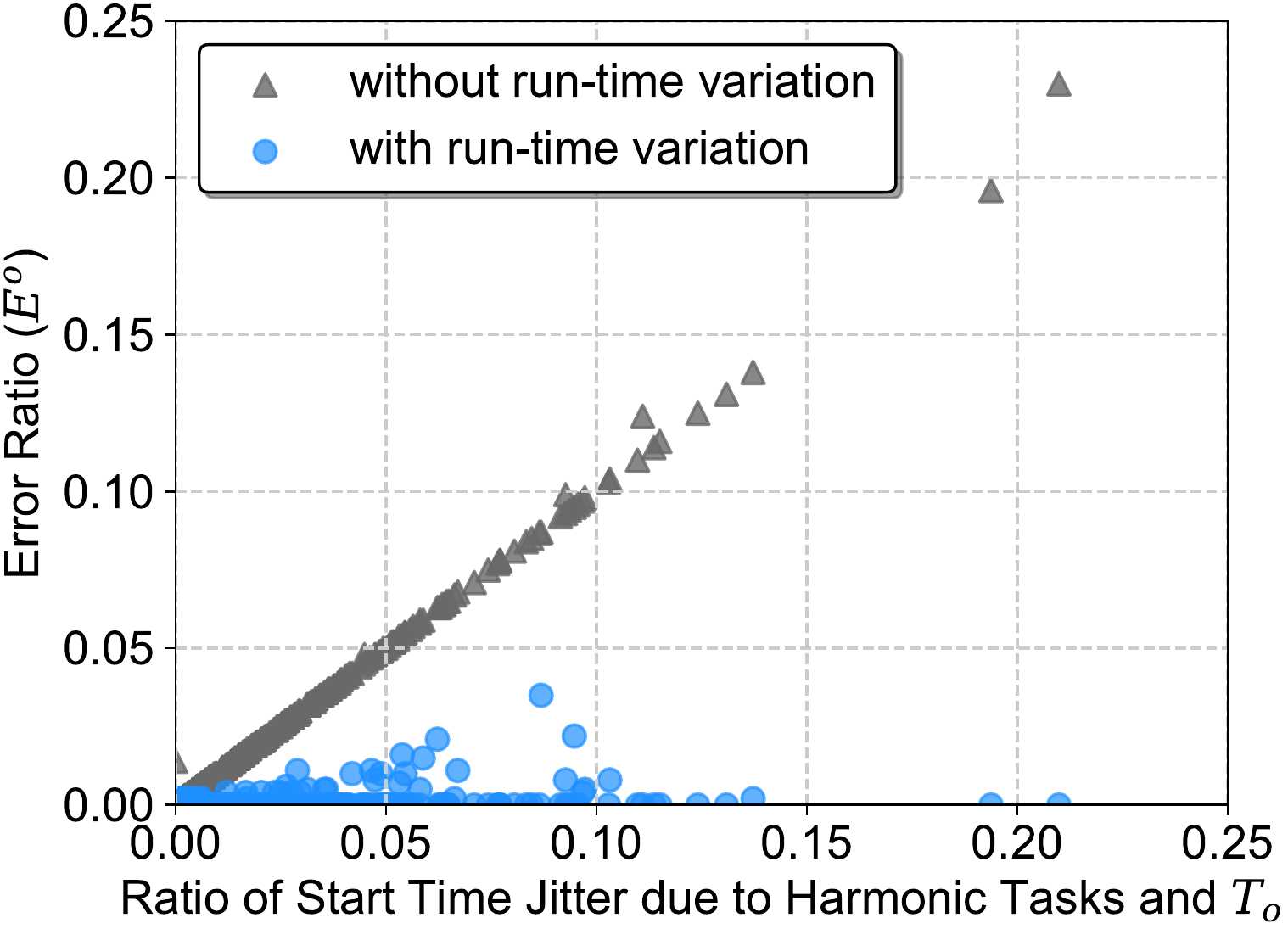}
    \caption{The impact of the start time delays caused by the tasks in harmony with the observer task on error ratio. The tested 6000 task sets are generated with at least one task in harmony with the observer task in each task set. The X-axis is the ratio of the estimated start time delays introduced by the harmonic tasks and the observer task's period and the Y-axis is the error ratio. 
    %Each point in the figure represents the result of a task set. 
    Each task set is tested with run-time variations and without run-time variations. The results show that the impact of the start time delays becomes subtle in the presence of run-time variations.}
    \label{fig:jittervsphio}
    \vspace{-1.3\baselineskip}
\end{figure}

Next we evaluate the impact of harmonic tasks. As pointed out in Section~\ref{sec:analysis_phio}, a task with a period that is in harmony with the period of the observer task can contribute a constant delay to the observer task's start times. Therefore, we regenerate 6000 task sets with at least one task in harmony with the observer task in each task set and test with the \ATTCKNF algorithms.
In this experiment, we let the \ATTCKNF algorithms collect start times within a duration of $10 \cdot LCM(T_o,T_v)$ (see Section~\ref{sec:eval_infer_phiv} for the explanation for the chosen duration).
Each task set is tested with two conditions: \ci without run-time variations and \cii with run-time variations. 
It is to examine the impact of the constant delay caused by the harmonic tasks in both a static and a more realistic RTS environment.
Without run-time variations, the task's execution times will run up to the WCET in every job instance which should retain the constant delay contributed by the harmonic tasks (or any potential delay contributed by other tasks). 
In contrast, as introduced in Section~\ref{sec:eval_setup}, with run-time variations, the task's execution times include variations that are drawn from a normal distribution that gives a more realistic run-time result.
The results are plotted in Figure~\ref{fig:jittervsphio} and show that, without run-time variations, the error ratio $\mathbb{E}^o$ is proportional to the ratio of the constant delay caused by the harmonic tasks and the observer task's period $T_o$. The outliers for the triangular points are the cases when there are other tasks contributing to the start time delays within the tested attack duration (\ie $10 \cdot LCM(T_o,T_v)$).
It is worth noting that only $6.75\%$ of the task sets have constant delay and $\mathbb{E}^o>0$ since the task phases are randomly generated and hence it is not guaranteed that the harmonic tasks can always interfere with the observer task's arrivals.
This also indicates that having harmonic tasks does not necessarily downgrade the proposed attack.
On the other hand, with execution time variations, the impact of the constant start time delay is significantly reduced due to the varied execution times. The \ATTCKNF algorithms yield better error ratios in all of the 6000 task sets %($\mathbb{E}^o=0$ in $99.58\%$ of the task sets).
($88.4\%$ of the task sets that have $\mathbb{E}^o>0$ without run-time variations yield $\mathbb{E}^o=0$ with run-time variations.)

\begin{figure}[t]
    \centering
    \includegraphics[width=0.75\columnwidth]{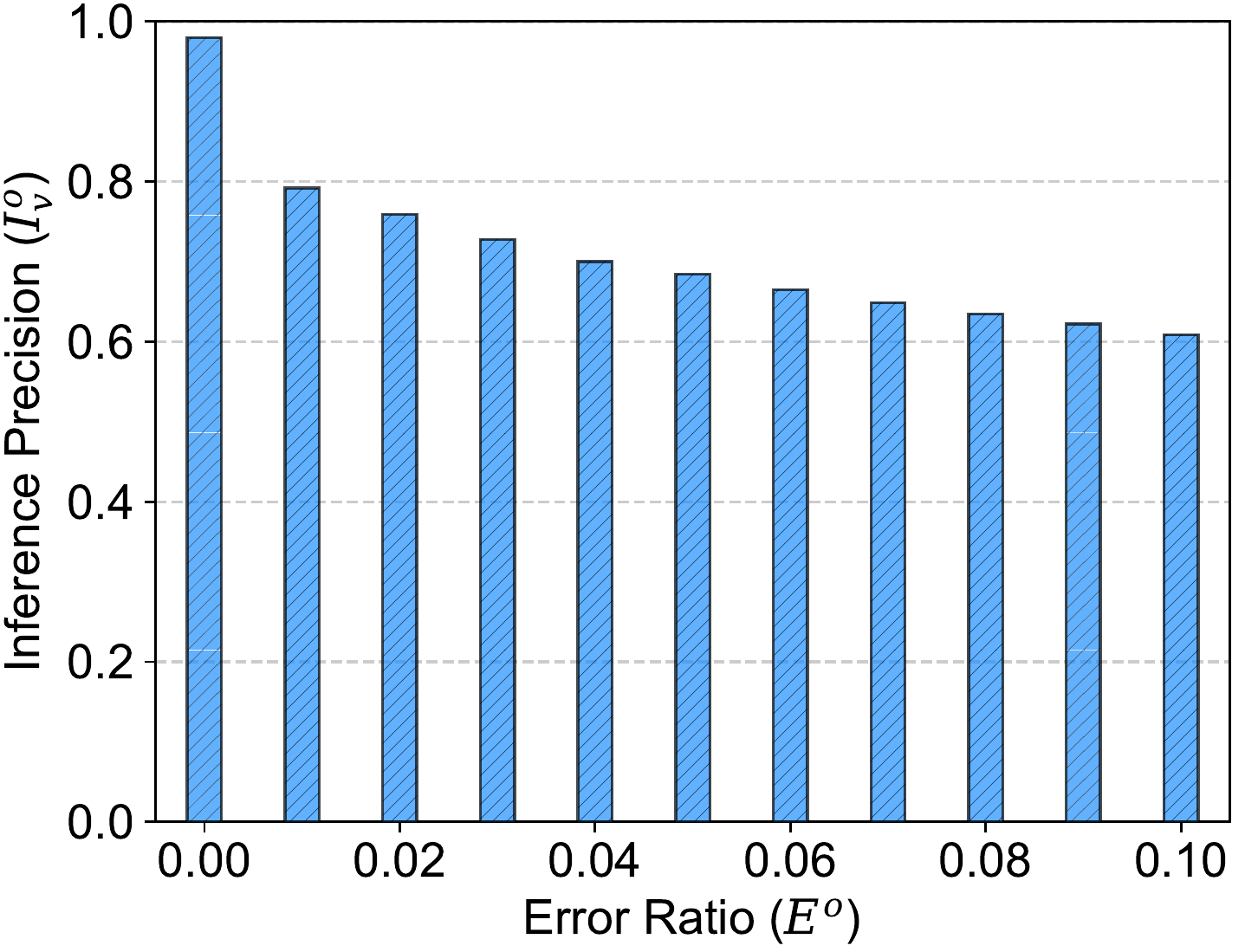}
    \caption{The impact of the error ratio on the inference precision. Each error ratio bin is the result of the 6000 task sets (with harmonic tasks) with synthetically generated $\widetilde{\phi_o}$ from $\{\phi_o + (0.01x)T_o \mid 0 \leq x \leq 10 \wedge x\in \mathbb{Z}\}$. It shows that the error ratio has negative impact on the inference precision.}
    \label{fig:error_vs_precision}
    \vspace{-1.3\baselineskip}
\end{figure}

To understand the impact of the error ratio of $\widetilde{\phi_o}$ on the inference precision of $\widetilde{\phi_v}$, we test the aforementioned 6000 task sets with synthetically generated $\widetilde{\phi_o}$ from $\{\phi_o + (0.01x)T_o \mid 0 \leq x \leq 10 \wedge x\in \mathbb{Z}\}$ which is expected to yield error ratio in $\{0, 0.01,...,0.1\}$. This range is chosen based on the experiment results shown in Figure~\ref{fig:jittervsphio} where overall error ratio is smaller than $0.034$ with run-time variations.
The experiment is carried out with a duration of $10 \cdot LCM(T_o,T_v)$ and with run-time variations enabled. Results shown in Figure~\ref{fig:error_vs_precision} indicate that the error ratio has considerable negative impact on the inference precision.
For example, inference precision $\mathbb{I}^0_v$ reduces to $0.79$ when $\mathbb{E}^o=0.01$ from $\mathbb{I}^0_v=0.98$ when $\mathbb{E}^o=0$. Nevertheless, considering only $0.7\%$ of the task sets have $\mathbb{E}^o>0$ due to the run-time variations and the varied task phases, a high $\mathbb{E}^o$ is arguably uncommon in real cases.

\subsubsection{Inferring The Victim Task's Phase $\phi_v$}
\label{sec:eval_infer_phiv}

\begin{figure}[t]
    \centering
    \includegraphics[width=0.75\columnwidth]{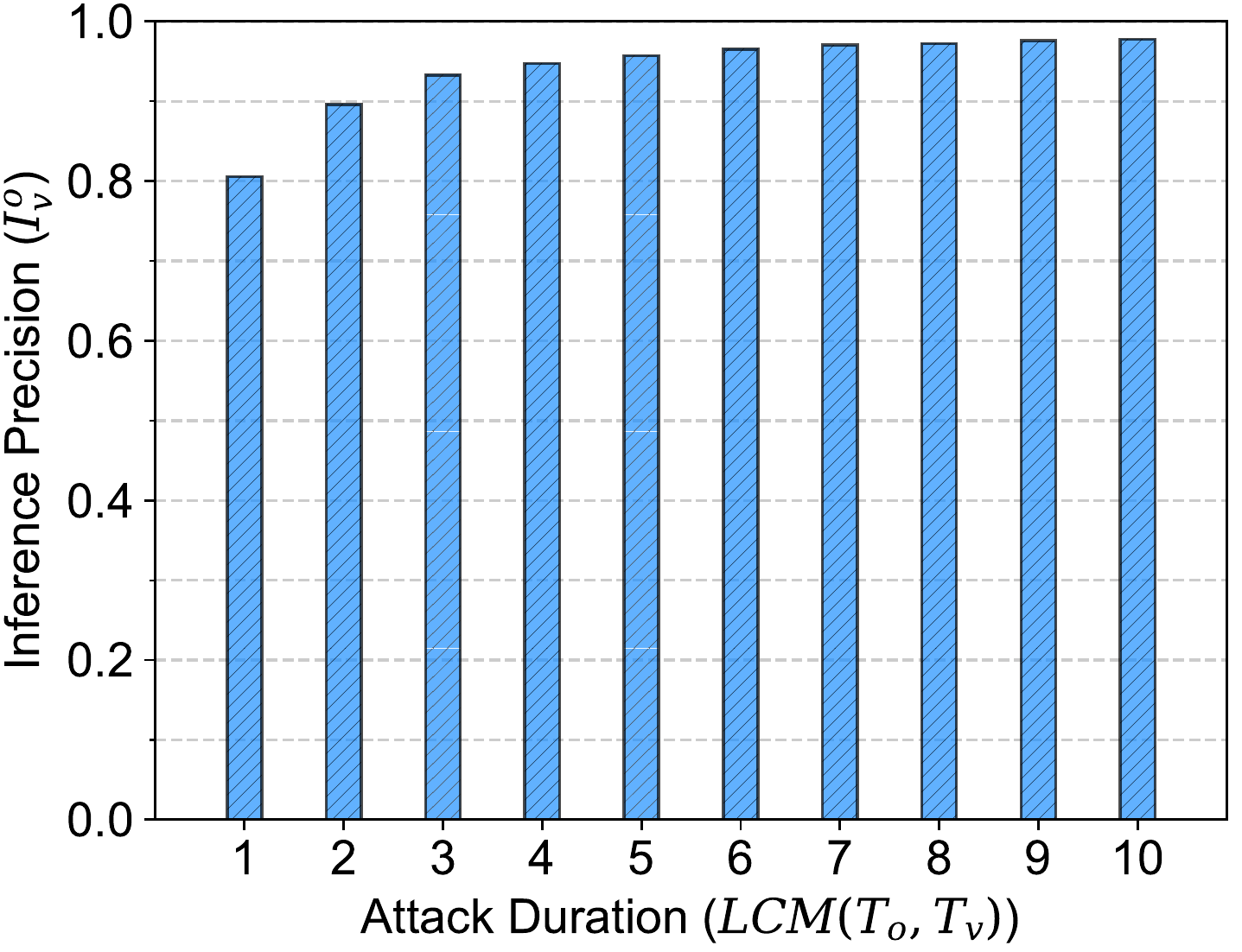}
    \caption{The inference precision with varying attack duration. The X-axis is the length of the attacks based on $LCM(T_o,T_v)$. It shows that the longer the attack persists the higher inference precision \ATTCKNF can achieve.}
    \label{fig:duration}
     \vspace{-0.5\baselineskip}
\end{figure}

We now focus on evaluating the inference precision of $\widetilde{\phi_v}$.
Note that the complete \ATTCKNF algorithms are tested in the experiments here. That is, the $\widetilde{\phi_o}$ values are reconstructed as a part of the algorithms during the experiments.
We first examine the impact of the attack duration on the attack results.
We use $LCM(T_o,T_v)$ as an unit of the attack duration to evaluate \ATTCKNF since the offset between the observer task and the victim task repeats every $LCM(T_o,T_v)$.
%To get an insight into the effectiveness of the \ATTCKNF algorithms, we use the same 6000 task sets to test both ScheduLeak against FP schedules and \ATTCKNF against EDF schedules.
A total of 6000 task sets are tested with the attack duration varying from $LCM(T_o,T_v)$ to $10 \cdot LCM(T_o,T_v)$ and the results are plotted in Figure~\ref{fig:duration}.
As shown, a longer attack duration leads to a higher inference precision.
It is because more execution intervals are reconstructed as the attack lasts longer.
The inference precision reaches $\mathbb{I}_v^o=0.978$ at $10\cdot LCM(T_o,T_v)$ and plateaus afterward.
%Additionally, it can be seen that \ATTCKNF on average performs $0.25\%$ worse than ScheduLeak in the inference precision.
%It is expected since in the \ATTCKNF algorithms only a part of the execution intervals are used for inferring the victim task's phase (Proposition~\ref{prop:reconstruct_execution_intervals}) while in the ScheduLeak algorithms the whole execution intervals are valid.
%Nevertheless, the difference in the inference precision between ScheduLeak and \ATTCKNF becomes insignificant (smaller than $0.1\%$) after $10 \cdot LCM(T_o,T_v)$.
Therefore, we choose to use an attack duration of $10 \cdot LCM(T_o,T_v)$ for the experiments presented in this section unless otherwise stated.

% \begin{figure}[t]
%     \centering
%     \begin{subfigure}[t]{0.45\textwidth}
%         \centering
%         \includegraphics[width=0.82\textwidth]{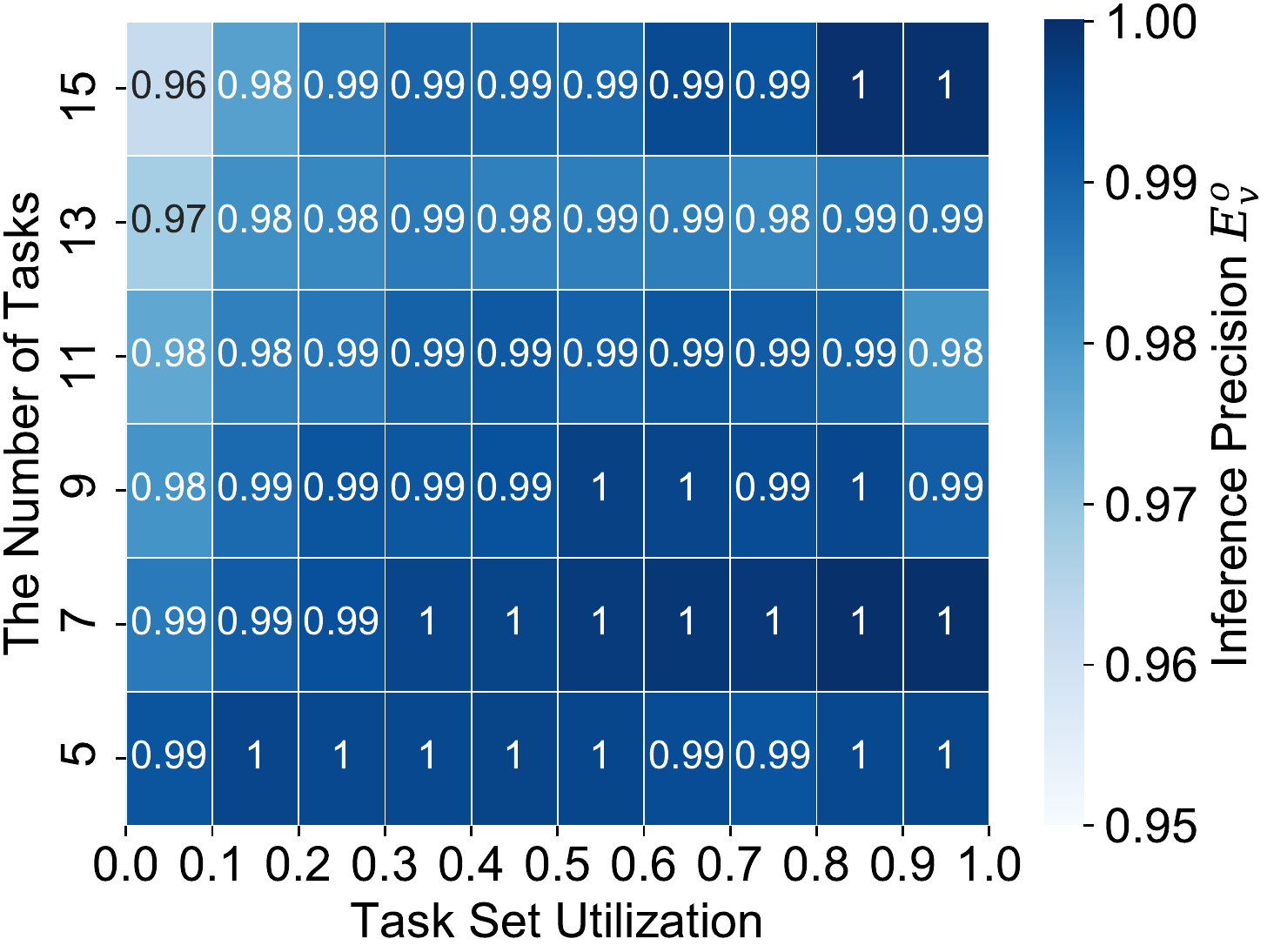}
%         %\vspace{-1\baselineskip}
%         \caption{\ATTCKNF against EDF schedules}
%         \vspace{1\baselineskip}
%     \end{subfigure}%  
%     \hfill  
%     % \vspace{1\baselineskip}
%     \centering
%     \begin{subfigure}[t]{0.45\textwidth}
%         \centering
%         \includegraphics[width=0.82\textwidth]{}
%         %\vspace{-1.3\baselineskip}
%         \caption{ScheduLeak against FP schedules}
% 	%\label{fig:arrv_seg_b}
%     \end{subfigure}
%     %\vspace{-0.3\baselineskip}
%     \caption{The impact of the number of tasks in a task set and the task set utilization on the inference precision in \ATTCKNF and SchedudLeak. Each grid displays the mean inference precision for the corresponding number of tasks (Y-axis) and task utilization (X-axis). A darker grid shows a lower inference precision. The results show that both \ATTCKNF and SchedudLeak have similar characteristics.}
%     \label{fig:task_num_vs_util_vs_precision}
%     %\vspace{-0.5\baselineskip}
% \end{figure}

\begin{figure}[t]
    \centering
    \includegraphics[width=0.75\columnwidth]{}
    \caption{The impact of the number of tasks in a task set and the task set utilization on the inference precision. Each grid displays the mean inference precision for the corresponding number of tasks (Y-axis) and task utilization (X-axis). A darker grid shows a higher inference precision.}
    \label{fig:task_num_vs_util_vs_precision}
     \vspace{-1.2\baselineskip}
\end{figure}

Next we break down the number of tasks in a task set and the task utilization to evaluate their impact on the inference precision.
%Similarly, the same task sets are tested on both cases of \ATTCKNF and ScheduLeak for comparison and the results are plotted in Figure~\ref{fig:task_num_vs_util_vs_precision}(a) and (b) respectively.
The experiment results are plotted in Figure~\ref{fig:task_num_vs_util_vs_precision}.
Each grid in the figure shows the mean inference precision of 100 task sets with the corresponding number of tasks in a task set and the task utilization. 
A brighter grid has a lower inference precision while a darker grid indicates that the attack yields a better inference precision.
The resulting heat map gives an intuition for the distribution of the inference precision with varying number of tasks and the task utilization.
% high number of tasks
%From the figures, both \ATTCKNF and ScheduLeak show a small degradation when the number of tasks in a task set is high.
The figure shows a small degradation when the number of tasks in a task set is high and the utilization is low.
It is because a task set with a higher number of tasks can have more tasks preempting and delaying the observer task's execution and this introduces more perturbations to the algorithms.  
%However, the results show that both algorithms can effectively extract the victim task's phase with a high precision \hl{(lowest precision observed is 0.97)}.
% low utilization
%We also observe that the inference precision is lower both for low and high task utilizations -- a low utilization implies a low execution time for the observer task which results in shorter execution intervals to be reconstructed.
On the other hand, a low utilization implies a low execution time for the observer task which results in shorter execution intervals to be reconstructed.
This fact makes it difficult to effectively eliminate false time columns and leads to more scattered candidate time slots for the last step of the algorithms, which causes more uncertainty to the inferences.
%On the other hand, a higher utilization implies larger execution times and greater disturbance from other tasks. 
%Nonetheless, the impact of the number of tasks generally outweighs that of the task utilization.
%Overall, the performance characteristics are similar between \ATTCKNF and ScheduLeak due to the similar attack strategy. 
%This matches what we pointed out earlier that the performance difference is smaller than $0.1\%$

\subsubsection{Impact of \ATTCKNF Coverage Ratio}

\begin{figure}[t]
    \centering
    \includegraphics[width=0.8\columnwidth]{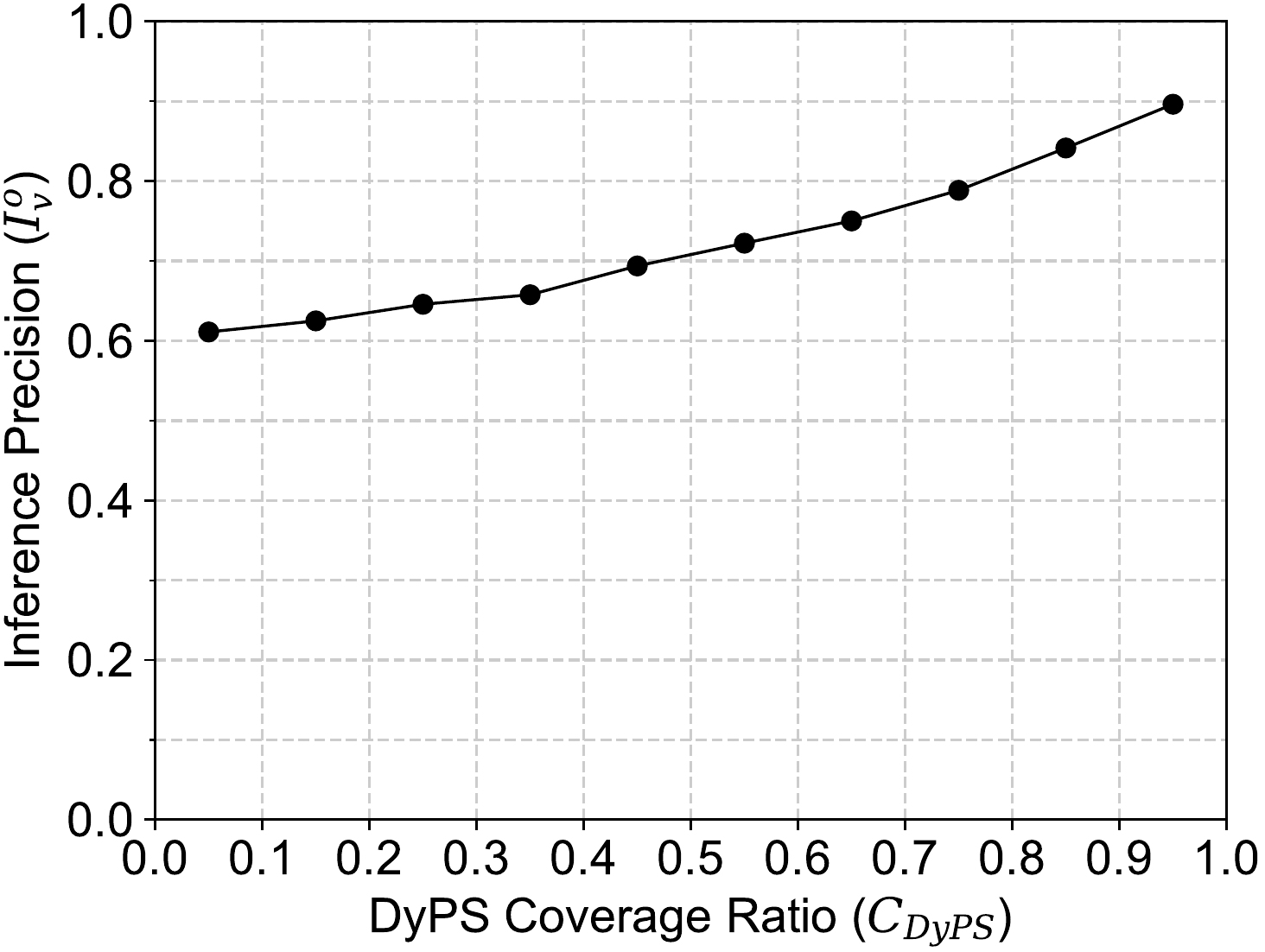}
    \caption{The inference precision with varying the \ATTCKNF coverage ratio in the range of $0 < \mathbb{C}_{\ATTCKNF} < 1$. It suggests that the \ATTCKNF algorithms get better performance as the \ATTCKNF coverage ratio increases. The \ATTCKNF algorithms perform better than a random guess when $\mathbb{C}_{\ATTCKNF} \approx 0$.}
    \label{fig:coverage_ratio}
     \vspace{-1.25\baselineskip}
\end{figure}

The \ATTCKNF coverage ratio $\mathbb{C}_{\ATTCKNF}$ represents the proportion of the time columns in a schedule ladder diagram that can be covered by the observer task's execution.
In the previous experiments with $\mathbb{C}_{\ATTCKNF} \geq 1$, we showed that the \ATTCKNF algorithms are able to yield competitive inference precisions in various task set conditions. 
Here, we evaluate the case when the \ATTCKNF coverage ratio is less than one (\ie $0 < \mathbb{C}_{\ATTCKNF} < 1$.)
%It's worth noting that the \ATTCKNF coverage ratio should not be compared with the coverage ratio in ScheduLeak as they capture different aspects of the relation between the observer task and the victim task and there exist no significant correlation between the two.
%Therefore, we only focus on the impact of the \ATTCKNF coverage ratio on the attack results here.
In this experiment, we generate 6000 task sets for each of the \ATTCKNF coverage ratio groups from $\{[0.001+0.1\cdot x,0.1+0.1\cdot x) \mid 0 \leq x \leq 9 \wedge x \in \mathbb{Z}\}$. The mean inference precision is then taken from each \ATTCKNF coverage ratio group and the results are plotted in Figure~\ref{fig:coverage_ratio}.
It shows that the \ATTCKNF algorithms get worse results as the \ATTCKNF coverage ratio drops.
When $\mathbb{C}_{\ATTCKNF} \approx 0$, the \ATTCKNF algorithms yield a inference precision ($\mathbb{I}^o_v=0.61$) that is close to, but still better than, a random guess.
%It shows that the \ATTCKNF algorithms get better results as the \ATTCKNF coverage ratio increases.
%It is because a higher \ATTCKNF coverage ratio indicates that the observer task can cover more time columns on the schedule ladder diagram and hence has a higher chance to cover the true time column.
Conversely, a higher \ATTCKNF coverage ratio has better inference precision since the observer task can cover more time columns on the schedule ladder diagram and hence has a higher chance to encapsulate the true time column.
%Error ratio is independent to the coverage ratio.

\subsubsection{Comparison with State-of-the-art (ScheduLeak)}
The main challenge in attacking the EDF RTS is the existence of the invalid part of the execution intervals.
To understand the impact on the state-of-the-art (\ie ScheduLeak) algorithms and how the \ATTCKNF algorithms handle such intervals, we test both algorithms in the EDF RTS with 6000 newly generated task sets (with $C_o > (T_o-T_v)$, task utilization drawn from $[0.001,1.0)$) in which the presence of the invalid execution intervals is guaranteed. 
Results presented in Figure~\ref{fig:duration_sleak_in_edf} show that the \ATTCKNF algorithms outperform the ScheduLeak algorithms in handling the invalid execution intervals. As the attack lasts longer, more invalid execution intervals are collected in the case of ScheduLeak that leads to worse inference precision. The ScheduLeak algorithms yield a mean inference prevision of $0.54$ at $10 \cdot LCM(T_o,T_v)$. This is only sightly better than a naive attack with random guesses indicating %This characteristic of ScheduLeak is contrary to \ATTCKNF with which greater attack duration yields better inference precision. Therefore, 
ScheduLeak is inadequate for attacking the EDF RTS.

%Additionally, it can be seen that \ATTCKNF on average performs $0.25\%$ worse than ScheduLeak in the inference precision.
%It is expected since in the \ATTCKNF algorithms only a part of the execution intervals are used for inferring the victim task's phase (Proposition~\ref{prop:reconstruct_execution_intervals}) while in the ScheduLeak algorithms the whole execution intervals are valid.
%Nevertheless, the difference in the inference precision between ScheduLeak and \ATTCKNF becomes insignificant (smaller than $0.1\%$) after $10 \cdot LCM(T_o,T_v)$.
%heat map: Overall, the performance characteristics are similar between \ATTCKNF and ScheduLeak due to the similar attack strategy. 

%================================%
%========== Subsection ==========%
%================================%
%\subsection{Case Study}
%To demonstrate the use of the scheduler side-channel in EDF RTS, the proposed \ATTCKNF algorithms are implemented and tested on an embedded system running RT Linux.
%\hl{CY: Removed the case study -- to be discussed.}
%I realized that this will take more than just a few days to finish a case study. The applications I have in RT Linux are not compatible with EDF (in current RT Linux kernel, it requires the application code to modify to run EDF). To have a case study, I must change the application code for EDF, which will take days to 1~2 weeks.
\section{Discussion}
%About EDF itself
%Compare EDF and FP

From the evaluation results, \ATTCKNF attacks perform well in inferring the victim task's phases and have similar performance characteristics to ScheduLeak~\cite{chen2019novel}  under various task set conditions.
As a result, given the same task set, running either FP scheduling or EDF scheduling does not seemn to make a significant difference in terms of resisting the attacks. %(even with the fact that it is more challenging to attack the EDF RTS as we have shown in this paper.)
%In most cases, given the same task set, the results do not 
However, the \ATTCKNF algorithms do require an extra step to reconstruct the observer task's phase before proceeding to inferring the victim task's phase, which is the crucial point that distinguishes the scheduler side-channels in FP RTS and EDF RTS.
As analyzed in Section~\ref{sec:analysis_phio}, the observer task may suffer constant start time delays when there exist some harmonic tasks that have $\psi(\tau_o,\tau_i)=1$ and $\phi_i$ aligned with $\phi_o$.
Having such tasks may lead to an inaccurate $\widetilde{\phi_o}$ and further reduce the inference precision as evaluated in Section~\ref{sec:eval_infer_phio}.
This offers one potential effective measure for defending EDF RTS against the \ATTCKNF attack. Consequently, employing the EDF scheduling alogrihtm and adjusting the task parameters to satisfy the aforementioned conditions can be a simple yet cost effective defense compared to the schedule obfuscation techniques (proposed for, \eg the FP scheduling~\cite{2016:taskshuffler} and the time-triggered scheduling~\cite{kruger2018vulnerability}) that require fundamental changes in the design of the schedulers.
\section{Related Work}

Side-channels and covert-channels have been well studied in many research domains.
Typical side-channels such as cache access time~\cite{kelsey1998side}, power consumption traces~\cite{Jiang2014}, electromagnetic emanations~\cite{agrawal2002side}, temperature~\cite{bar2006sorcerer}, \etc, may also be seen against RTS.
In this paper, we focus on scheduler-based channels.
V\"{o}lp \emph{et al.} \cite{embeddedsecurity:volp2008} examined covert channels between tasks with different priorities in the FP RTS. The authors proposed to modify the scheduler to alter the task that could potentially leak information with the idle task to avoid such covert channels. %Other work by V\"{o}lp \emph{et al.} \cite{embeddedsecurity:volp2013}, the authors studied information leakage over shared-resource covert channels and proposed a transformation of resource locking protocols to preserve the confidentiality guarantees of the schedulers.
Kadloor \emph{et al.} \cite{kadloor2013} presented a methodology for quantifying side-channel leakage for first-come-first-serve and time-division-multiple-access schedulers. Gong and Kiyavash \cite{GongK14} analyzed the deterministic work-conserving schedulers and discovered a lower bound for the total information leakage. 
There also has been other work on covert channels and side-channels in RTS~\cite{tsalis2019taxonomy, ghassami2015capacity, embeddedsecurity:son2006,embeddedsecurity:volp2013}.
In contrast to the above side-channels and the existing covert-channels, this paper focuses on the scheduler side-channels in which an unprivileged task is able to learn the behaviors of others by simply analyzing its own execution.

There exists a line of work that focus on the scheduler side-channels in RTS.
% ScheduLeak
Chen \etal~\cite{chen2019novel} first introduced the scheduler side-channels in preemptive FP RTS and proposed the ScheduLeak algorithms that can extract valuable task information from the system schedules at run-time with using an unprivileged user-space task. 
%The ScheduLeak algorithms (as well as our \ATTCKNF algorithms) require the period of the targeted task to be known to the attacker beforehand. While this information may be obtained (via a vendor-based attack model or publicly available documentations) in 
Liu \etal~\cite{liu2019leaking} proposed a method to obtain the task parameters (\ie the victim task's period) that are required in the scheduler side-channel attacks.
Yoon \etal~\cite{2016:taskshuffler} attempted to close the scheduler side-channels by introducing a randomization protocol that obfuscates the schedules in the FP RTS.  
Based on this idea, Kr{\"u}ger \etal~\cite{kruger2018vulnerability} proposed a combined online/offline randomization scheme to reduce determinisms for time-triggered systems. 
Nasri \etal~\cite{nasri2019pitfalls} conducted a comprehensive study on the schedule randomization protocol and argued that such techniques can expose the FP RTS to more risks.
While these work are centered in the problem of the scheduler side-channels, none of them focused on the dynamic-priority RTS and thus the above defense techniques are inapplicable to preventing the proposed \ATTCKNF attack. 

% General security in RTS
There has been some work on security integration in RTS~\cite{easwaran2017systematic,chen2018securing,YoonMHM:2015,xie2007schedulesecurity,lin2009rtssecurity,trilla2018cache}. %Zadeh:2014
Most of the work focused on defence techniques against general attacks. 
Some researchers framed security in RTS as a scheduling problem~\cite{embeddedsecurity:mohan2014,embeddedsecurity:mohan2015}. 
%Mohan \emph{et al.} \cite{embeddedsecurity:mohan2014} offered a consideration of real-time system security requirements as a set of scheduling constraints and introduced a modified fixed-priority scheduling algorithm that integrates security levels into scheduling decisions.
%Pellizzoni \emph{et al.} \cite{embeddedsecurity:mohan2015} extended the above scheme to a more general task model and also proposed an optimal priority assignment method that determines the task preemptibility. 
Hasan \etal~\cite{hasan2016exploring} discussed the considerations of scheduling security tasks in legacy RTS.
This is further extended to integrating security tasks into more general RTS models~\cite{hasan2018design,hasan2017contego}.
There exist another group of works focused on hardening RTS from the architecture perspective.
Mohan \etal~\cite{mohan2013s3a} proposed to use a disjoint, trusted hardware component (\ie FPGA) to monitor the behavior of a real-time program running on an untrustworthy RTS.
Yoon \etal~\cite{yoon2013securecore} created the SecureCore framework that utilizes one of the cores in a multi-core processor as a trusted entity to carry out various security checks for the activities observed from other cores.
Abdi \etal~\cite{abdi2018guaranteed,abdi2018preserving} developed a restart-based approach that uses a root-of-trust (\ie a piece of hardware circuit) and the trust zone technology to enforce the system reboot process to evict any malicious dwellers when necessary.
While some of the techniques are useful for detecting anomalies and mitigating the impact of the attacks, they do not close the scheduler side-channels presented in this paper.

\section{Conclusion}
\label{sec::concl}

Dynamic priority scheduling algorithms such as EDF are able to better optimize the resources in the system (as compared to FP schedulers). In addition, they also increase the ``dynamicism'' in the system that can increase the difficulty for would-be-attackers looking to leak critical information via side-channels. Though not immune from such attacks, employing harmonic tasks with certain characteristics seems to have the potential to increase the difficulty for attackers. Hence, an increased adoption of dynamic scheduler, along with the aforementioned design choices, in RTS may aid in improving the security posture of such systems. 

\begin{comment}
Previous works have explored scheduler side channels and defenses in fixed-priority real-time systems. Here we showed that scheduler side channels can also be effectively exploited in dynamic priority systems, EDF RTS in particular. However, due to the ``dynamism'' in the system that natu
Dynamic priority scheduling algorithms such as EDF are able to better optimize the resources in the system (as compared to FP schedulers). In addition, they also increase the ``dynamicism'' in the system that can increase the difficulty for would-be-attackers looking to leak critical information via side-channels. Though not immune from such attacks, it seems like increasing the number of tasks or even increasing the overall utilization of the system can increase the difficulty for attackers. Hence, an increased adoption of dynamic schedulers in RTS may be an important step in improving the security posture of such systems. 
\end{comment}

%\begin{comment}

%\end{comment}

\begin{comment}
Adversaries are always looking for different ways to attack systems. Therefore, no system should be designed without security considerations . 
It is particularly true for RTS that deliver safety-critical services in which the consequence of security breaches can be catastrophic.
Researchers have been striving to craft more secure RTS, and the discovery of scheduler side-channels in the dynamic-priority RTS a step in this direction. 
\end{comment}
%nonstop effort 
%unfinished responsibility

%\input{Sections/appendix}

%\pagebreak
\newpage
%\IEEEtriggeratref{10}
%\bibliographystyle{plain}
\bibliographystyle{IEEEtran}

\bibliography{ref}

\end{document}